\newcommand{\N}{\mathbb{N}}
\newtheorem{definition}{Definition}
\newtheorem{proposition}{Proposition}
\newenvironment{proof}{{\bf Proof}:}{ QED. \\}
\newenvironment{acks}{{\bf Acknowledgements:} \begin{quote}}{\end{quote}}
\newtheorem{theorem}{Theorem}
\newtheorem{example}{Example}
\newtheorem{lemma}{Lemma}
\begin{document}

\markboth{B. Faltings et al.}{
Peer Truth Serum: Incentives for Crowdsourcing
Measurements and Opinions}

\title{Peer Truth Serum: Incentives for Crowdsourcing
Measurements and Opinions}
\author{BOI FALTINGS \\ Swiss Federal Institute of Technology (EPFL) \\
RADU JURCA \\ Google Z\"{u}rich \\
GORAN RADANOVIC \\ Harvard University}

\maketitle

\paragraph{Abstract:}

Modern decision making tools are based on statistical analysis of abundant data, which is often collected by querying multiple individuals. 
We consider data collection through crowdsourcing, where independent and self-interested agents, non-experts, report measurements, such as sensor readings, opinions, such as product reviews, or answers to human intelligence tasks. 
Since the accuracy of information is positively correlated with the effort invested in obtaining it, 
self-interested agents tend to report low-quality data. Therefore, there is a need for incentives that cover the cost of effort, while discouraging random reports. We propose a novel incentive mechanism called Peer Truth Serum that encourages truthful and accurate reporting, showing that it is the unique mechanism to satisfy a combination of desirable properties.

\section{Introduction}

``In God we trust, all others must bring data" \cite{ElementsStatisticalLearning} tweets Mike Bloomberg, former Mayor of New York City \cite{BloombergTweet}, to emphasize the commitment of his administration to define policies, and take actions, based on data. Mr. Bloomberg is not alone to embrace a data-driven governance model; companies, governments and private individuals all over the world rely increasingly often on hard evidence, measurements and surveys to guide important decisions.

Data collection, however, can be  expensive and slow. For this reason, a lot of recent research was devoted to {\em crowdsourcing} mechanisms, where  a crowd of independent self-interested participants contribute data (opinions, measurements, answers, etc) through an online system without being supervised by a central authority~(Figure~\ref{fig:crowdsourcing}). Examples are crowdsensing and community sensing~\cite{Crowdsensing,CommunitySensing}, product reputation systems \cite{Dellarocas} operated by sites such as Tripadvisor\footnote{www.tripadvisor.com} or Amazon\footnote{www.amazon.com}, human computation platforms such as Amazon Mechanical Turk, online opinion surveys, and prediction or decision markets~\cite{PredictionMarket,PredictionMarket2,PredictionMarketSurvey,DecisionMarket}.

We model the scenario of crowdsourcing measurements as a {\em center} that seeks to gain knowledge about a random variable $P$ that can only be observed through independent {\em agents}. For example, $P$ can be the weight of an object (referencing the first scientifically documented example \cite{Galton} reported by Galton), the reading of a sensor or a quality of a company's strategy. For practical reasons, observations of $P$ are assumed to take values in a discrete, finite set $X = \{x_1, x_2 , \ldots x_N \}$, such as positive integers for the weight of an object, one to five stars for a product rating, or yes/no answers to an intelligence task. Each agent may report any of these values. The center aggregates the reports and typically compensates the agents for their effort.

\begin{figure}

\centerline{\includegraphics[width=8cm]{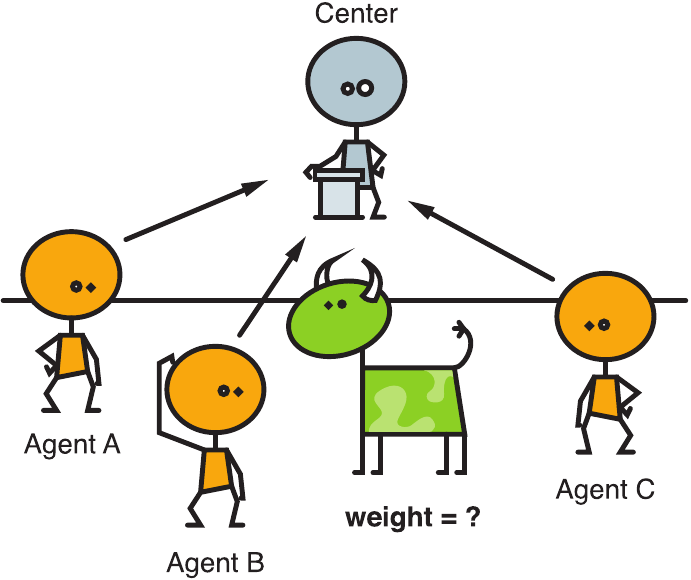}}

\caption{\it Crowdsourcing scenario. A center wants to determine the probability distribution of a random variable $P$, but can only observe it through independent {\em agents} reporting their estimates $r_A, r_B, ...$. In return, the center rewards the agents with payments $score_A,score_B,...$.}
\label{fig:crowdsourcing}
\end{figure}

In community sensing, agents install and maintain sensors and report their measurements to a center. The agents obtain different samples of the same phenomenon plus measurement noise, and the center aggregates them into a map. 
In human computation tasks, several crowdworkers are given the same task, such as counting objects in
an image, and their answers are aggregated into an average that eliminates individual inaccuracies. 
In public opinion polls, participants report their opinions or predictions about phenomena such as election outcomes, sports events, or engineering breakthroughs. The center publishes a continuously evolving aggregate of these reports. 

Studies and empirical observations suggest that crowdsourced agents often give inaccurate answers, for reasons of minimizing effort, incompetence or sloppiness \cite{difallah2012mechanical,ipeirotis2010quality}. Reward mechanisms should therefore be designed to maximize an agent's expected payment when reporting the value they privately believe to be the most accurate.

One class of {\em incentive-compatible} reward mechanisms is based on proper scoring rules \cite{savage,lambert}. These apply in applications where a ground truth eventually becomes public knowledge and can be used as reference to score the initial predictions: in Galton's example the bull was publicly weighed, and the participants were rewarded based on how close they came to the true weight. Proper scoring rules create strong incentives, in the sense that accuracy is a {\em dominant strategy}, optimal regardless of the reports submitted by other agents. Weather forecasting applications \cite{ScoringRuleWeather} and Prediction Markets \cite{PredictionMarket,PredictionMarket2,PredictionMarketSurvey,DecisionMarket} are prime examples of how scoring rules incentivize truth telling.

In many crowdsourcing scenarios a ground truth does not exist (e.g., the star-rating of a product) or will never be known. Therefore, 
another class of mechanisms takes the reports of one or several randomly
chosen {\em peers} as ground truth. If peers report accurately, this is also the best strategy for the agent being scored, and thus becomes an equilibrium. In this paper, we call such mechanisms {\em peer consistency} mechanisms. 

Peer consistency mechanisms have been used for various applications~(e.g. reputation reports~\cite{jf2003}). The most well-known characterization
is peer prediction~\cite{miller} which establishes a framework based on  proper scoring rules. 
 Peer consistency mechanisms have been proposed for sensor networks~\cite{Papakonstantinou-et-al}, for monitoring quality of service~\cite{JurcaBinderFaltings} and for incentivizing best and more accurate work in crowdsourcing~\cite{peer-consistency,DasguptaGoshwww13}. These mechanisms operate under different assumptions on the elicitation process, often assuming that all 
agents' beliefs are affected by measurements in an identical way (e.g. \cite{miller,BTS}), 
which is not realistic in practice. We investigate what is possible when this assumption is weakened or removed for a very general setting with 
non-binary answer space.

\section{Related Work}

The peer prediction mechanism \cite{miller} and 
the Bayesian truth serum (BTS) \cite{BTS} both have a setup of 
a standard Bayesian game where agents' prior information 
is defined by their type that they obtain in statistical
manner. Agents share a common belief about how they acquire 
their private information and thus their types. In peer prediction,
this common belief is also known to a mechanism designer, while
BTS removes the need of knowing it by eliciting 
additional information that indicates agents' posterior beliefs.   

Recently, many methods have been developed on the basis of
these two mechanism, making them more robust in different 
aspects, and often modifying the formal settings of these
methods. For example, robust versions of the (original) peer prediction 
were developed by \cite{jf2009}, with focus put on 
achieving robustness in terms of collusion resistance.

Robust Bayesian truth serums \cite{Witkowski-aaai12,W:14,rf2013,rf2014} extend the original 
version of the BTS mechanism to be applicable for small population of
agents \cite{Witkowski-aaai12} and elicitation of non-binary information \cite{rf2013,W:14,rf2014}.
The minimal truth serum by \cite{MinimalBTS} shows how to reduce the
amount of `redundant' information in a BTS elicitation process.

A peer prediction without a common prior by \cite{witkowski-ec12,W:14} uses a two stage 
elicitation process to remove the need of common prior information.
The mechanism is applicable to a setting with a clear temporal separation between periods prior to and posterior to 
the observation process. The knowledge free peer prediction \cite{knowledgeFreePP} also assumes a 
similar two stage setting, but unlike the mechanisms of \cite{witkowski-ec12}, it requires agents to 
have a common prior belief. Both mechanisms elicit agents beliefs  (posterior or prior) along with 
the desired information.   

There is also a line of work investigating settings where 
agents observe multiple a priori similar phenomena 
\cite{DasguptaGoshwww13,learningPrior,rf2015} - instead of observing a sample from one phenomenon, agents
observe samples of different phenomena. The setting removes the need
of agents having the common prior belief. Most of these mechanisms are, 
however, designed for elicitation of binary information \cite{DasguptaGoshwww13,learningPrior}.

Other related mechanisms include simple output agreement 
mechanisms (e.g. \cite{wcOutputAgreement}), that elicits common knowledge,
an opinion poll mechanism of \cite{truthfulSurveys}, 
that does not provide strict incentives without the presence 
of a trusted peer, and a mechanism for aggregating expert opinions of \cite{collectiveRevelation}, that assumes 
a common prior belief of a particular structure (e.g. Bernoulli distribution) that is known to the mechanism. 

We investigate a setting where agents, who arrive stochastically,  
observe one and the same phenomenon. We are interested 
in mechanisms that are {\em minimal}, meaning that they only elicit agents' private information, while not requiring knowledge of agents' beliefs. 
These conditions make the known mechanisms mentioned in this 
section inapplicable to our setting, which is in accordance with the impossibility 
results of \cite{rf2013}, and motivates the work in this paper. The main
idea is to build on the idea of \textit{helpful reporting} \cite{jurca-faltings-ec08,acmec-workshop-2011} to 
develop a unique mechanism that incentivizes agents to provide \textit{useful} information, under mild assumptions on agents' beliefs.

\section{Contributions}

With respect to the existing literature, this paper makes three
major novel contributions. 

The first is that we show a novel incentive mechanism for truthful information elicitation, the {\em Peer Truth Serum}. With respect
to peer prediction and the Bayesian Truth Serum, it replaces the
assumption of a common posterior belief, obtained after measurement,
with the assumption of a common prior belief and a restriction
on the belief update to ensure that it reflects the observation.

This is an important step forward as we know of no scenario where 
the common posterior assumption is reasonable, whereas we have found the common prior assumption to be applicable in many practical settings. We mention application to some of these settings in Section~\ref{sec:applications} of this paper: community sensing, human computation, and prediction polls.

The second contribution is to show that the Peer Truth Serum associated
with a particular restriction on the belief update is unique up to a
linear transformation. This allows us to prove several impossibility
theorems, which as a third and main result gives us a complete characterization
of the space of achievable properties of minimal elicitation mechanisms with
respect to different assumptions about belief updates.

\section{Setting}

We consider a {\em center} that is interested in obtaining information
about a phenomenon that can only be observed by a set of agents over which the center has no control. For example, an environmental phenomenon like pollution, rainfall or distribution of wildlife is observed by different people in different locations, or the customer satisfaction with a new product is observed by the customers themselves.

We model the phenomenon as a random variable $P$, and observation of agents as a noisy signal $o$ with $N$ possible values $X = \{x_1, \ldots, x_N\}$.%
Observations are independently generated according to a distribution $Q$ which is dependent on the state (value) of $P$. The distribution $Q$ can be any probability distribution function over the set $X$. 
We denote by $\Delta(X)$ a set of a fully mixed probability distribution functions\footnote{A probability function is fully mixed if its probability values are strictly positive.} over answer space $X$, and we assume that $Q \in \Delta(X)$.    

To incentivize the agents to accurately report their observations, the
center rewards their observations according to a peer consistency 
mechanism, defined below. The reward can be a monetary payment, or
another form or recognition that is valued by the agents.

\subsection{Peer consistency mechanisms}

We use the term {\em peer consistency} to denote a class of mechanisms
that works as follows. A {\em center} solicits information about random variable $P$ from {\em agents} in rounds, indexed by the variable $t = 1,2, \ldots$. In every round the center seeks answers from $M > 1$ different agents that observe $P$, and the communication between the center and any of the agents is private. 
When reporting observations, an agent may lie, but does not collude with other agents.  
At the end of the round, the center rewards each agent that contributed
a report using a payment function $\tau$ that takes into account one
or several peer reports that have been submitted about $P$ by other
agents. 
Importantly, agents participate in the mechanism only once; it is not
a repeated game.

The center aggregates the reports $r$ received until time period $t$ into a histogram $H^t = \{h_1..h_n|h_i = |\{r|(r=x_i) \wedge (time(r) \leq t)\}|\}$, where $time(r)$ is the time when $r$ was received. It should be
initialized to be a fully mixed histogram with small numbers, for example
all 1 or according to a prior distribution. The distribution $R^t$ is a normalized version of $H^t$.

In this paper, we consider in particular peer consistency mechanisms
with {\em continuous revelation}, where at the end of each round the center publishes $R^{t}$.%
We call $R^t$ the \textit{public distribution}, and
we often drop the index $t$ from $R^t$ when it refers to the current round. 
 
As compensation for information, the center offers payment to the agents,
using a payment function defined as follows:
\begin{definition}
A {\em payment function} is denoted by $\tau: X \times X \times \Delta(X) \rightarrow \Re$ and takes as arguments:
\begin{itemize}
\item the agent's own report $r$,
\item a report provided by another {\em peer} agent during the same round, called the {\em reference report} and denoted as $rr$,
\item a distribution $R$ over all possible values.
\end{itemize}
and computes a numeric reward $\tau(r,rr,R) \in \Re$ that the center
pays to the agent.
\end{definition}
We note that agents know the distribution $R^{t}$, but do not know
the reference report $rr$ or the agent that submitted it. The distribution $R^{t}$ makes the entire history of reports available 
to the payment function. 
We are interested in mechanisms that work even for small values of $M$,
possibly as low as $M=2$, to allow frequent updating, and 
therefore use only a single reference report.

\subsection{Reporting Strategies} 

A pure reporting strategy is a mapping $s$ that specifies what answer an agent will report given his observation, his prior belief and the distribution of reports submitted by other agents:
\begin{definition}
A {\em reporting strategy} is a function: $s : X \times \Delta(X) \times \Delta(X) \rightarrow X$ where $s(o, Pr, R) \in X$ is the report when the agent has observed $o$. A strategy is dependent on $R$ and
the agent's prior belief $Pr$ (although in general, the strategy depends on both the prior and the posterior belief, we only make explicit the dependency on the prior belief). 

For a {\em truthful} reporting strategy we have: $\forall o: s(o, Pr, R) = o$.

A {\em singleton} reporting strategy is a function that maps to a single value: $\forall o: s(o, Pr, R) = x $.
\end{definition}

The reporting strategy for each agent is their action in the {\em reporting game} played by the $M$ agents reporting in the current round. 

\subsection{Agent Beliefs}

We assume that agents participating in the mechanism are rational and risk-neutral. Before starting a task, all agents have the same 
information and thus a {\em common} prior belief
$Pr[\cdot] \in \Delta(X)$ regarding the distribution of agents' observations. Each agent then draws an individual sample of the phenomenon, which could be an observation, an active measurement, or the result of a more complex investigation, and concludes on the value of its sample $o \in X$. After the observation is made, the agent privately updates her belief about the distribution of agents' observations to a new and private distribution $Pr[\cdot|o] \in \Delta(X)$. Figure~\ref{fig:beliefs} shows how an agent forms its posterior belief.

\begin{figure}

\centerline{\epsfig{file=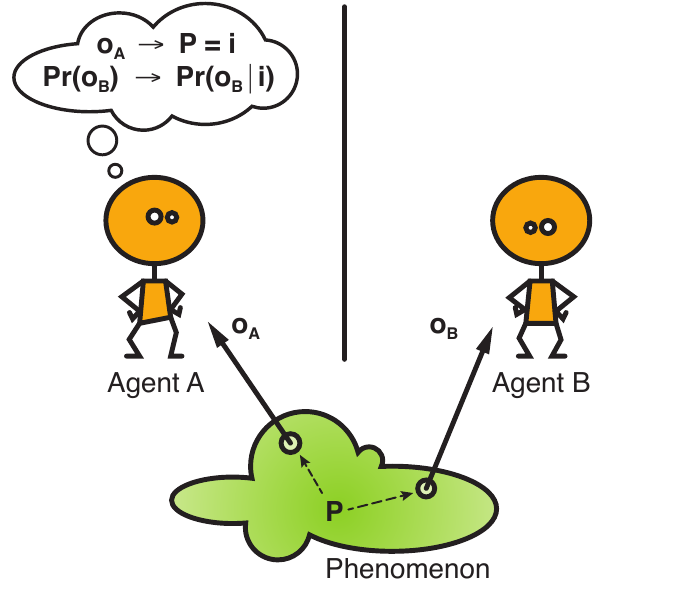,width=8cm}}

\caption{\it Before observation, agent $A$ will have a common {\em prior} belief about phenomenon $P$, in particular about distribution $Q$, and thus about the observation $o_B$ and the report $r_B$ it will be scored against. Upon observing the evidence $o_A$, agent $A$ forms a {\em posterior} belief about $P$, in particular about $Q$, and consequently also about $r_B$. For example, if $o_A$ is indicative of $P=i$, and agent $A$ believes that agent $B$ is a peer agent observing the same phenomenon, she will also increase its belief that $B$ will observe $o_B=o_A$ and report $r_B=o_A$.}
\label{fig:beliefs}
\end{figure}

We assume that agents who report in round $t$ have a \textit{common} prior belief $Pr^{t}[\cdot]$, which is typical for many peer prediction mechanisms. The center may not know this prior belief. In the course
of using the mechanism we describe, the public distribution $R^{t}$ 
aggregates an increasing number of peer reports and thus becomes an increasingly good approximation of the true distribution of
agent reports, as rational agents will adopt it as their prior belief.

When there are few or no prior reports, $R^{t}$ could be very 
different from $Q$ and also $Pr$. In this case, we assume that the
common prior will be {\em informed} with respect to the public
distribution. To characterize the difference, we first define:
\begin{definition}
A probability distribution $R$ is {\em $\rho$-close} to distribution $P$ if and only if:
\[
(\forall x) r(x) \in [(1-\rho) p(x) .. (1+\rho) p(x)] 
\]
\end{definition}
and define informedness as follows:
\begin{definition}
An agent's prior belief $Pr$ about an observation with the true distribution $Q$ is {\em informed} with respect to a distribution $R$ 
if and only if $\forall x \in X: %
(R[x] - Q[x])(R[x] - Pr[x]) \geq 0$.
It is {\em $\rho$-informed} if and only if it is either informed, or
$R$ is $\rho$-close to $Pr$.
\label{def:informed-prior}
\end{definition}
This is based on the observation that if agents do not agree with the public prior $R$, they should have
better information about the phenomenon and thus a more accurate belief.
For example, they may note that a fire generates a lot of smoke and thus
makes previous pollution readings inaccurate.

Following their observation $o$, agents may update their beliefs
in different ways, not known to the center. For example, agents may have different confidence in their observations, and thus give them different weights when combining them with the priors. They may also consider
different correlations between values. As there is no communication among
agents, they may adopt very different posterior beliefs about $P$ and
the peer reports. This differentiates our setting from standard peer prediction (\cite{miller}), where agents are required
to also update their beliefs in an identical manner so as to arrive at common posterior beliefs where the scoring rule can be applied. 

Thus, we consider that agents have two pieces of private information:
\begin{itemize}
\item their {\em observation}, for which there is a common prior
belief $Pr^t$;
\item their {\em update function}, which is a private type $\theta$ and
is a mapping from the prior probability distribution $Pr^t$ to the
posterior $Pr(\cdot | o)$. 
\end{itemize}

However, in order for an incentive mechanism to work the belief update
must reflect the observation $o$; in particular the observed value $o$ should have a higher probability in the posterior than in the prior distribution. We consider two forms of this constraint given in
Definitions~\ref{def:self-dominating} and~\ref{def:self-predicting} below.

If the agent is very convinced of her measurement, she may make it the value with the highest probability in $Pr[x|o]$:
\begin{definition}
An agent's belief update is {\em self-dominating} if and only if the observed value $o$ has the highest probability among all possible 
values $x$:
\begin{equation}
Pr[o|o] > Pr[x|o] \;\; \forall x \neq o
\label{eq:self-dominating-condition}
\end{equation}
\label{def:self-dominating}
\end{definition}
This condition will hold whenever agents believe that they observe an
identical sample from the distribution, for example they count the number of customers in the same restaurant at a specific time, or they measure
the temperature at the same place and time.

However, in many cases this is obviously too strong. For example, a customer who is dissatisfied with a product that has a very high reputation may believe that he has received a bad sample, and thus could still believe that the majority of customers are satisfied. Thus, we introduce a weaker condition by only requiring the {\em increase} in the probability of the observed value to be highest. 

\begin{definition}
An agent's belief update is {\em self-predicting}
if and only if the observed value has the highest relative increase in probability among all possible values: 
\begin{equation}
Pr[o|o]/Pr[o] > Pr[x|o]/Pr[x]  \;\; \forall x \neq o
\label{eq:self-predicting-condition}
\end{equation}
We call $\delta(o) = min_{y \neq o} \left ( \frac{Pr[o|o]}{Pr[o]}\frac{Pr[x]}{Pr[x|o]} -1 \right )$ the {\em gap} of the self-prediction
for observation $o$.
\label{def:self-predicting}
\end{definition} 

This condition allows for values to be correlated, so that beliefs about very similar answers may also increase. 
Bayesian updating of a categorical or a multinomial distribution with a Dirichlet prior (as in Dirichlet-categorical and Dirichlet-multinomial models)
is an example of a rational updating process that satisfies the self-predicting condition. The details of this updating model can be found in the proofs 
of Proposition \ref{prop:nopriorfree} and Proposition \ref{prop:no-arbitrage}. 

Note that for binary-valued signals, whenever the same observations of different agents are positively correlated, i.e. $Pr[o|o] > Pr[o]$,
the self-predicting condition is satisfied (see Section \ref{sec:binary}), while self-domination does not necessarily hold. In this case, the self-predicting condition is weaker than the self-dominating condition, but this does not have to be the case in general. Furthermore, notice that both of these conditions imply \textit{stochastic relevance} \cite{miller}.

It is possible to use alternative versions of the self-predicting
condition and obtain analogous mechanisms; we discuss such an alternative
later in the paper in Definition~\ref{def:self-predicting-quadratic}.

The main results of this paper are for the case where the public
distribution $R$ is a close approximation of the common prior $Pr$;
these are belief structures 1 and 2 below.

\begin{enumerate}

\item all belief updates are self-dominant, and this is known to all agents. Agents do not need to have a common prior or belief updating.
We will show that a simple mechanism is sufficient to ensure a truthful
equilibrium for this case.

\item all belief updates are self-predicting, and this is known to all
agents. The center knows the common prior within a small error tolerance.
This is the scenario for which we can show the strongest results.

\end{enumerate}

In the initial phases of the mechanism, the center may not have
a correct approximation of the agents' prior belief, and it is
interesting to analyze the behavior of mechanisms in such
belief structures:

\begin{enumerate}

\setcounter{enumi}{2}

\item all belief updates are self-predicting, and every agent believes
that its peers update their beliefs following the self-predicting
condition, and that some of them are truthful. The center does
not know the common prior. This is another, more general scenario for the initial stages of the mechanism, where we can show good behavior only
with additional restrictions on the value space.
\label{UntruthfulBeliefStructure}

\end{enumerate}

\paragraph{Common Knowledge}

Throughout the paper, we assume that agents have a common prior 
belief $Pr$ and that this is also common knowledge among all of them.
The only exception is the output agreement mechanism that does not require this assumption.

Furthermore, where we assume that agent priors are informed, or 
that agents have a particular belief structure, this
belief is also common knowledge to all agents. 

\subsection{Solution concepts}

Depending on the belief structure assumed, we can choose a suitable 
solution concept to the reporting game. The payoffs in the reporting
game depend on two uncertain elements:
\begin{itemize}
\item the observation of the peer agent, and
\item the resulting belief updates performed by the peer agent
that will determine its expected rewards.
\end{itemize}

Agents' beliefs about the peer observations will usually
depend on their observations. While the prior beliefs are assumed
common, even for the same observation the belief updates are
{\em subjective} and so the distributions $Pr(x|o)$ may be different for each agent. We cannot model this as a Bayesian game, but as a
subjective game.

We cannot model uncertainty about belief updates as a Bayesian game either, since there is no common prior distribution about these except
for the common knowledge that they satisfy the self-dominant or self-predicting condition, and it does not appear reasonable to assume
that agents would even form a subjective distribution. 
We model the belief updating function as an agent i's private type $\theta(i)$, and say that the type is {\em admissible}
if it satisfies the restrictions that are assumed as common knowledge, such as self-dominant or self-predicting. Thus, agents have no beliefs 
about other agents' belief updates except that they are of an admissible
type.

For the equilibrium, we use the concept of {\em ex-post subjective equilibrium} introduced by ~\cite{witkowski-ec12}:
\begin{definition}
Let $s_{i}$ be the reporting strategy of agent $i$ and $\hat{s}_{-i}$ be the profile of reporting strategies that agent $i$ believes of all other agents except $i$. In particular, let $\hat{s}_j(x,Pr,\theta_j,R) = s_j(x,Pr,R)$ be the reporting strategy adopted by agent $j$ with type $\theta_j$.
We say that the strategy profile $(s_{i}, \hat{s}_{-i})$ forms an {\em ex-post subjective equilibrium} if no agent $i$ can gain a higher expected payoff by deviating from $s_{i}$, for all admissible private types $\theta$ of any peer agent $j$, i.e.
\[
(\forall i,s',j,\theta_j) E_{Pr_i[x|o]} \tau(s_i(o,Pr,R), \hat{s}_j(x,Pr,\theta_j,R),R) \geq E_{Pr_i[x|o]} \tau(s'(o,Pr,R), \hat{s}_j(x,Pr,\theta,R),R)
\]
\end{definition}
We suggest this solution concept in particular for belief structures 1 and 2.

The uncertainty in belief structure~\ref{UntruthfulBeliefStructure} cannot in general be treated by an ex-post subjective equilibrium, as depending on their belief updates
different agents may believe that different underreported values are
their best response against truthful agents. Thus, in some cases no single non-truthful strategy can be an ex-post best response. However, because of their common prior, agents choose from the same set of underreported values that are eligible to be a best response to a
truthful peer.

The case where only one value is underreported occurs very frequently: when the phenomenon has a single predominant value, that value and no other will be underreported in the initial distribution. It also occurs when there are only 2 possible values. In this case, there is
only one possible non-truthful equilibrium.

\section{Properties of incentive mechanisms}

In this section we give an overview of desirable properties of an elicitation mechanism applicable to our setting. 

\subsection{Minimal Elicitation} 

\begin{definition}
We call an incentive mechanism for information elicitation {\em minimal} if only the elicited information is transmitted to the center.
\end{definition}
This property is important to maintain an efficient elicitation, and is different from existing mechanisms that also require agents to transmit information about their beliefs, whose size often significantly exceeds the elicited information itself. For example, in the Bayesian Truth Serum an agent has to submit a probability distribution over all possible answers in addition to indicating its own answer. This can be much more
information than the report itself and significantly inflate the effort
of the agent providing the information.

\subsection{Arbitrage-Free}

\begin{definition}
We call an incentive mechanism for information elicitation {\em arbitrage-free} if the expected payment of an agent who has no other information but the publicly available histogram $R$ is the same for all possible answers:
\begin{equation*}
\sum_{rr \in X} R[rr] \cdot \tau(r, rr, R) = C, \; \; \forall r \in X \\
\end{equation*}
where $C$ does not depend on reports $r$ and $rr$, but can depend on $R$.
\end{definition}
This property has important implications: agents who have only the publicly available prior information do not contribute to the accuracy of the mechanism, and thus should not have an incentive to participate. The arbitrage-free property allows to reduce their
expected payment to zero by subtracting $C$. 

\subsection{Unconstrained-prior}

\begin{definition}
An incentive mechanism for information elicitation is {\em unconstrained-prior} if
it places no restrictions on the prior beliefs of agents. 
\end{definition}
A mechanism that makes no assumptions about agents' prior beliefs is less susceptible to misreporting when such assumptions are not satisfied by all agents. We often consider this property together with 
informed priors according to Definition~\ref{def:informed-prior}. 

\subsection{Heterogeneous Update Tolerant}

\begin{definition}
An incentive mechanism for information elicitation is {\em heterogeneous update tolerant} if it does not require agents to update their beliefs in response to an observation in an identical manner.
\end{definition}
This property allows the population of agents to be diverse in the way they react to their measurements. 

However, within heterogeneous update tolerant mechanisms we will consider two types of restriction on the updates,
namely the self-dominating and self-predicting conditions given in
Definitions~\ref{def:self-dominating} through~\ref{def:self-predicting}.

\subsection{Truthful}

\begin{definition}
We call an incentive mechanism {\em truthful}, or {\em incentive-compatible}, if and only if truthful reporting is a strict equilibrium of the mechanism. 
\end{definition}

To exclude the trivial cases where the agent is indifferent between the equilibrium strategy $s_{i}$ and some other reporting strategy, we require the equilibrium to be {\em strict}. This also implies that strategies are pure because strict equilibria cannot contain mixed strategies. 

Note that in crowdsourcing settings where agents have to exert effort to find the correct answer, truthfulness also means maximizing their effort. 
As we require truthful reporting to be a strict equilibrium,
it is possible to scale the rewards so that the gain will offset the effort of truthful reporting \cite{miller,WB:13}.

\subsection{Asymptotically Accurate}\label{sec_aa}

The reason why we want truthful reports is to eventually obtain an accurate estimate of the true distribution $Q$ of observations. 
However, truthfulness is not a necessary conditions to obtain an
accurate estimate. As a weaker notion, we consider what is required so that over time, the aggregation of reports $R^{t}$ will converge to $Q$.
This also allows reports that are not truthful but {\em helpful} in the sense that they drive the aggregate towards the true distribution.

\begin{definition}
An incentive mechanism for information elicitation is said to be \textit{asymptotically accurate} if and only if it admits a strategy profile as a strict equilibrium
for which:
\[
\forall \epsilon > 0, \forall Q \in \Delta(X): \lim_{t \rightarrow \infty} Pr(\lVert R^{t}, Q \rVert \ge \epsilon) \rightarrow 0
\]
where $\lVert \cdot \rVert$ is $L_1$ norm between the vectors $R^{t}$ and $Q$.
\end{definition}

\begin{proposition}
Any truthful mechanism is asymptotically accurate.
\end{proposition}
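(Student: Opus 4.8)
The plan is to take the witnessing strict equilibrium to be the truthful reporting profile itself. Since the mechanism is assumed truthful, truthful reporting $s(o,Pr,R)=o$ for every agent is by definition a strict equilibrium, so the only thing left to establish is that under this profile the public distribution $R^{t}$ converges in probability to $Q$. Throughout I condition on the realized state of the phenomenon $P$, so that the observation distribution $Q$ is a fixed element of $\Delta(X)$, exactly as quantified in the statement.

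First I would note that fixing the reporting strategy to be truthful collapses all the subjective structure of the model: each report equals the reporting agent's observation, which is an independent draw from $Q$, regardless of that agent's private type $\theta$ and regardless of the current $R^{t}$ the agent observes. Hence the reports accumulated by the center after $t$ rounds are $Mt$ i.i.d.\ samples from $Q$, and $H^{t}$ equals the fixed, fully mixed initialization $H^{0}$ (a vector of small positive numbers) plus, coordinate-wise, the counts of those samples; $R^{t}$ is the normalization of $H^{t}$.

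Next I would apply the strong law of large numbers coordinate by coordinate: for each $x_i \in X$ the fraction of the $Mt$ samples equal to $x_i$ converges almost surely to $Q[x_i]$. The additive contribution of $H^{0}$ to the numerator $h_i^{t}$ and to the total count $\lVert H^{t} \rVert_1$ stays bounded while the total count grows without bound, so it washes out, giving $r^{t}(x_i) \to Q[x_i]$ almost surely for every $i$. Since $X$ is finite, summing over coordinates yields $\lVert R^{t}, Q \rVert \to 0$ almost surely, and almost sure convergence implies convergence in probability, which is exactly $\lim_{t\to\infty} Pr(\lVert R^{t}, Q \rVert \ge \epsilon) = 0$ for all $\epsilon>0$ and all $Q \in \Delta(X)$. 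Thus the truthful profile is a strict equilibrium satisfying the asymptotic accuracy condition.

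I do not expect a genuine obstacle here; the proposition is essentially a consistency check, its content being that truthful reporting makes the histogram an unbiased empirical distribution of i.i.d.\ samples of $Q$. The only subtleties worth spelling out are (i) that pinning the strategy to truthful removes every dependence on heterogeneous belief updates and on the feedback through $R^{t}$, so the samples are genuinely i.i.d., and (ii) that the bounded prior-style initialization $H^{0}$ does not bias the limit — both handled by elementary arguments.
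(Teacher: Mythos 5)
Your argument is correct and is essentially the paper's own proof: the paper also observes that under truthful reporting the histogram of reports converges to $Q$ by the law of large numbers, and merely states this in one line. Your additional care with the fully mixed initialization $H^{0}$ washing out and with almost sure convergence implying convergence in probability just makes explicit what the paper treats as trivial.
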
 
\begin{proof}
This proposition is trivially satisfied: if all agents report truthfully, the histogram of reports converges to the true distribution $Q$ due to the law of large numbers.
\end{proof}

The converse, however, is not true; there are asymptotically accurate mechanisms that do not have truthful equilibria. For example, a mechanism
where the center knows the true distribution $Q$ and incentivizes all
agents to report a value $x$ chosen such that $Q[x] - R^t[x] > 0$, would
have $R^t[x]$ converge to the true distribution, but not be truthful,
as $x$ is independent of the actual observation. However, such a 
mechanism has no practical use as the center has to already know the
distribution $Q$ to being with. 

Mechanisms with pointwise discontinuities can also be asymptotically accurate but have no truthful equilibrium. Consider
a mechanism $\tau'$ defined using a given asymptotically accurate mechanism $\tau$ as:
\begin{align*}
 \tau'(r, rr, R) = 
\begin{cases} 
	1 &\mbox{if } R[r] = \frac{1}{2} \textit{ and } r = rr \\ 
	0 &\mbox{if } R[r] = \frac{1}{2} \textit{ and } r \ne rr \\ 
	\tau(r, rr, R) &\mbox{otherwise}
\end{cases}
\end{align*}
The mechanism is asymptotically accurate, but is not truthful when $R[r]=\frac{1}{2}$. 
To rule out such cases, we impose a condition that 
agents' strategies are consistent whenever parameter $R$ 
and prior $Pr$ are sufficiently close. More precisely, if a reporting strategy $s$ is chosen
for some $R$ and $Pr$, then there must be a small interval in $R$ and $Pr$ that contains this
point and where the agent chooses the same strategy:
\begin{definition}   
An agent's strategy $s(o, Pr, R)$ is \textit{consistent} if there exist $\epsilon >0$ such that:
\begin{align}
\forall R', Pr' \in \Delta_{\epsilon}: s (o, Pr', R') = s(o, Pr, Pr)
\end{align}
where $\Delta_{\epsilon} = \{ F | F \in \Delta(X) \land  \lVert F, Pr \lVert < \epsilon \}$ and $\lVert . \lVert$ is $L_1$ norm.
\end{definition} 
Note that this definition allows the strategy to be different at the ends of the ranges $\delta$.

With these restrictions, we can now show that all
asymptotically accurate mechanisms must have a truthful equilibrium 
when the distribution $R$ is close to the agents' common prior 
belief $Pr$. 

\begin{proposition}\label{prop:truthrequired}
Suppose agents use pure consistent strategies. 
Then any asymptotically accurate mechanism with a continuous payment function that does not depend on the true distribution $Q$ must have an equilibrium strategy $s$ that is truthful whenever the private belief $Pr$ is sufficiently close to the public histogram $R$, i.e.:
\[
\exists \epsilon > 0 \; \; s.t. \lVert Pr, R \lVert < \epsilon \Rightarrow  \forall o \in X: s(o, Pr, R) = o \;
\]
where $\lVert . \lVert$ is $L_1$ norm. 
\end{proposition}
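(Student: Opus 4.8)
The plan is to argue by contradiction, first localising the failure of truthfulness to a single point of the diagonal $\{Pr = R\}$ and then showing that the reporting strategy one finds there cannot steer the public histogram to an arbitrary $Q$. Suppose no such $\epsilon$ exists. Then there is a sequence of pairs $(Pr_n,R_n)$ with $\lVert Pr_n,R_n\rVert\to 0$ and observations $o_n$ with $s(o_n,Pr_n,R_n)\neq o_n$. Since $X$ is finite and $\Delta(X)$ is bounded, I would pass to a subsequence along which $o_n$ equals a fixed value $o^{\ast}$, the report $s(o^{\ast},Pr_n,R_n)$ equals a fixed value $x^{\ast}\neq o^{\ast}$, and $Pr_n,R_n$ both converge to a common limit $P^{\ast}$ (common precisely because $\lVert Pr_n,R_n\rVert\to 0$). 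Applying the consistency hypothesis at the prior $P^{\ast}$ yields an $\epsilon^{\ast}>0$ such that $s(o,Pr',R')=s(o,P^{\ast},P^{\ast})$ whenever $Pr'$ and $R'$ are within $\epsilon^{\ast}$ of $P^{\ast}$; evaluating along the tail of the subsequence forces $s(o^{\ast},P^{\ast},P^{\ast})=x^{\ast}\neq o^{\ast}$. Writing $\sigma\colon X\to X$ for the pure map $o\mapsto s(o,P^{\ast},P^{\ast})$, we thus have $\sigma\neq\mathrm{id}$, and $s$ coincides with $\sigma$ throughout the neighbourhood $U=\{F\in\Delta(X):\lVert F,P^{\ast}\rVert<\epsilon^{\ast}\}$ of $P^{\ast}$.

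The second ingredient is an elementary fact about the push-forward on distributions, $(\sigma_{\ast}Q)[x]=\sum_{o:\sigma(o)=x}Q[o]$. Since $\sigma_{\ast}$ is the $0/1$ column-stochastic matrix with $(\sigma_{\ast})_{x,o}=1$ iff $\sigma(o)=x$, it equals the identity matrix iff $\sigma=\mathrm{id}$; in particular $\sigma_{\ast}e_i=e_{\sigma(i)}$, so if $\sigma\neq\mathrm{id}$ the fixed set $\{Q:\sigma_{\ast}Q=Q\}$ is a proper affine subspace of the hyperplane $\{\sum_x Q[x]=1\}$. Hence there exist fully mixed distributions $Q\in U$ with $\sigma_{\ast}Q\neq Q$; fix one such $Q$.

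Now I would run the mechanism with true observation distribution $Q$, under the equilibrium strategy $s$ that witnesses asymptotic accuracy (a single strategy serves every $Q$ because neither $\tau$ nor the agents' subjective reporting game references $Q$). By asymptotic accuracy $R^{t}\to Q$, and by the continuous-revelation assumption the agents' common prior $Pr^{t}$ tracks $R^{t}$ (rational agents adopt the published histogram), so for all sufficiently large $t$ both $R^{t}$ and $Pr^{t}$ lie in $U$. While this holds, every newly submitted report is $\sigma$ applied to an observation drawn from $Q$, i.e.\ a sample from $\sigma_{\ast}Q$; since $R^{t}$ is the empirical average of all reports received so far up to the vanishing contribution of the $O(1)$ initial counts, the law of large numbers drives $R^{t}\to\sigma_{\ast}Q$. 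Comparing with $R^{t}\to Q$ gives $\sigma_{\ast}Q=Q$, contradicting the choice of $Q$. Hence no counterexample exists and $s$ must be truthful on $\{\lVert Pr,R\rVert<\epsilon\}$ for a suitable $\epsilon>0$.

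The delicate point is the final step: one must verify that the public histogram, and with it the prior, genuinely stays inside the region $U$ on which the played strategy is the fixed map $\sigma$ for \emph{all} sufficiently late rounds, so that ``new reports are i.i.d.\ from $\sigma_{\ast}Q$'' is valid and the law-of-large-numbers conclusion is not spoiled by the strategy switching while $R^{t}$ drifts. Since asymptotic accuracy is only convergence in probability, this requires a short trapping/drift argument: once $R^{t}$ is close enough to $Q\in\mathrm{int}(U)$ it cannot leave $U$ before the histogram has already moved a definite distance toward $\sigma_{\ast}Q\neq Q$, and iterating this is what contradicts $R^{t}\to Q$. The continuous-revelation (and informed-prior) assumption is used exactly to couple $Pr^{t}$ to $R^{t}$ here, while the continuity and $Q$-independence of $\tau$, together with the finiteness of $X$, are what let us analyse a single consistent equilibrium strategy $s$ in the first place rather than a discontinuous or $Q$-dependent family.
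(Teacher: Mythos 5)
Your proof is correct and follows essentially the same route as the paper's: argue by contradiction, use the consistency assumption to pin down a fixed non-truthful reporting map on a neighbourhood of the diagonal $\{Pr = R\}$, then choose a true distribution in that neighbourhood that is not fixed by the induced push-forward, so that the law of large numbers contradicts asymptotic accuracy. The differences are minor: the paper first sets $Q = Pr$ and, in the case where the misreports happen to compensate, switches to an asymmetric prior $Pr'$ inside the consistency ball, whereas you select $Q$ with $\sigma_{\ast}Q \neq Q$ via the proper-affine-subspace observation and are more explicit about the subsequence extraction and the trapping/drift step that the paper leaves implicit.
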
     

{\bf Proof}: see Appendix.

Agents may misreport their observations if the public prior $R$ is sufficiently far from their prior beliefs. 
However, once the public histogram accurately approximates their prior beliefs, the agents report honestly. 

\section{Output Agreement}

When belief updates are heterogeneous and self-dominating, there is a simple class of mechanisms that are truthful independently of the prior distributions, and thus do not require the common prior assumption we
require for other settings in this paper. Following~\cite{vonAhn:2008:DGP:1378704.1378719}, we call these {\em output
agreement} mechanisms. Such mechanisms are widely proposed in the
literature, for example~\cite{jf2003,peer-consistency,Carvalho:2014:OMI:2692375.2692384}.

\begin{definition}
An {\em output agreement} mechanism is a peer consistency mechanism with the payment function 
\[
\tau(r, rr) = \begin{cases}
C \mbox{ if }  r=rr \\  0 \mbox{ otherwise}
\end{cases}
\]
where $C > 0$ does not depend on reports $r$ and $rr$.
\end{definition}

\begin{theorem}
There is a unconstrained-prior and truthful incentive mechanism for peer prediction that allows agents heterogeneous prior beliefs and heterogeneous updates satisfying the self-dominating condition.
\label{theo:self-dominating-exists}
\end{theorem}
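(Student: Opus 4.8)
The plan is to show that the \emph{output agreement} mechanism defined just above, $\tau(r,rr) = C$ if $r=rr$ and $0$ otherwise with $C>0$, is the desired witness. Two of the three claimed properties are immediate from the form of $\tau$: since the payment function makes no reference whatsoever to any agent's prior belief $Pr$ or to any agent's update function, the mechanism is \emph{unconstrained-prior} (it imposes no condition on the $Pr^t$) and \emph{heterogeneous update tolerant} (it never asks an agent to update in a prescribed way). So the only real content is to verify that output agreement is \emph{truthful}, i.e. that the truthful profile is a strict equilibrium when all belief updates satisfy the self-dominating condition.

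The key step is the following computation. Fix an agent $i$ with observation $o$, and suppose the single reference agent $j$ reports truthfully, i.e. $\hat s_j(x,Pr,\theta_j,R)=x$ for every type $\theta_j$. Then the reference report $rr$ equals $j$'s observation $x$, which $i$ believes is distributed according to its posterior $Pr_i[\cdot\mid o]$. Hence if $i$ reports $r$, its expected payment is
\[
E_{Pr_i[x\mid o]}\,\tau(r,x) \;=\; C\sum_{x\in X} Pr_i[x\mid o]\,\mathbf{1}[x=r] \;=\; C\cdot Pr_i[r\mid o].
\]
Because $C>0$, this is maximized exactly at the value carrying the largest posterior probability. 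The self-dominating condition (Definition~\ref{def:self-dominating}) states precisely that $Pr_i[o\mid o] > Pr_i[x\mid o]$ for all $x\neq o$, so the maximizer is unique and equals $o$. Thus reporting $o$ is the strict best response to a truthful peer, for every observation $o$, which makes the truthful profile a strict equilibrium.

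Finally I would observe that this is in fact an \emph{ex-post subjective equilibrium} in the sense of the definition above: the computation used nothing about the reference agent's type $\theta_j$ — once $j$ is truthful, $j$'s update function is irrelevant to $i$'s expected payoff — so $E_{Pr_i[x\mid o]}\tau(o,x) > E_{Pr_i[x\mid o]}\tau(r,x)$ holds for every admissible $\theta_j$ and every $r\neq o$. The argument also never required $i$ and $j$ to share a prior or to update identically, so heterogeneous priors and heterogeneous updates are genuinely tolerated, as claimed.

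The only delicate point I anticipate is \emph{strictness}: it rests entirely on the strict inequality built into the self-dominating condition, which is why Definition~\ref{def:self-dominating} uses $>$ rather than $\ge$ — with a weak inequality an agent could be indifferent between $o$ and a co-maximal value and the equilibrium would fail to be strict (and could admit mixed deviations). Beyond that, the proof is essentially immediate, since output agreement reduces the agent's problem to maximizing its own posterior probability of the peer's report.
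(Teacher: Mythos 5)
Your proposal is correct and follows essentially the same route as the paper: exhibit the output agreement mechanism and observe that against a truthful peer the expected payment is $C\cdot Pr_i[r\mid o]$, which by the self-dominating condition is uniquely maximized at $r=o$, independently of priors or update functions. Your additional remarks on strictness and the ex-post subjective nature of the equilibrium are correct elaborations of what the paper leaves implicit.
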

\begin{proof}
In the peer consensus mechanism with $C = 1$, the expected reward for an agent with observation $x$ and report $y$ is equal to $Pr[y|x]$. By the self-dominating property~\ref{def:self-dominating}, $Pr[x|x] > Pr[y|x]$ for all $y \neq x$, and thus the expected payment is maximized by truthfully reporting $x$, independently of prior beliefs. Thus, the peer consensus mechanism is an example of a truthful mechanism that satisfies the theorem.
\end{proof}

\begin{example}
To illustrate the principle of the peer consensus scheme (with $C = 1$), consider a ternary answer space $X = \{x, y, z\}$, and suppose that an agent $i$ has a posterior and prior belief as shown in Figure \ref{fig:example-pc}. If agent $i$ believes that her peer $j$ is honest, her best response is also to be honest. For example, if agent $i$ observes $o_i = x$, her belief that her peer has also observed $x$ is equal to $Pr[o_j = x|o_i = x] = 0.7$. On the other hand, her belief that peer $j$ has observed $o_j = y$ becomes $Pr[o_j = y|o_i = x] = 0.3$. It is now easy to see that agent $i$'s expected score for reporting $x$ is $\mathbb E[\tau(x,o_j)|o_i = x] = 0.7$, and this is greater than her expected score for reporting $y$, $\mathbb E[\tau(y,o_j)|o_i = x] = 0.3$. 

\begin{figure}[h!]
\centering
\begin{tabular}{l|c|c|c|}
 & \multicolumn{3}{c}{Observation $o_j$} \\ 
 Belief & $x$ & $y$ & $z$\\ 
 \hline
 $Pr[o_j ]$ & 0.3 & 0.4 & 0.3 \\
 \hline
 $Pr[o_j | o_i = x]$ & 0.7 & 0.2 & 0.1  \\
 \hline
 $Pr[o_j |o_ i = y]$ & 0.1 & 0.8 & 0.1 \\
 \hline
 $Pr[o_j |o_ i = z]$ & 0.2 & 0.3 & 0.5
 \end{tabular}
\caption{self-dominating belief}
\label{fig:example-pc}
\end{figure}

\end{example}

Ideally, we would have an incentive mechanism that admits truthful reporting as an equilibrium without any conditions on the agents' prior beliefs. However, there could be two agents with prior beliefs
$Pr_1$ and $Pr_2$ such that $Pr_1$ updated according to an observation
$x$ is identical to $Pr_2$ updated according to an observation $y$.
A payment rule must give identical expected rewards to both agents
and thus incentivize them to report identical values, and so only
one of them can be truthful. More formally:
\begin{proposition}
\label{prop:nopriorfree}
There is no unconstrained-prior payment function $\tau$ such that the truthful reporting strategy is a strict equilibrium.
The statement remains true even if agents have common informed priors with belief updates satisfying the self-predicting condition.  
\end{proposition}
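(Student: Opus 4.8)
The plan is a "merging posteriors" argument: the expected payment of an agent depends on its private observation only through the induced posterior belief about the peer's report, so if two scenarios are such that a \emph{different} observation produces the \emph{same} posterior, no single payment function can make truthful reporting a strict equilibrium in both. First I would record the relevant expected‑payment expression. When the peer reports truthfully, the reference report $rr$ equals the peer's observation, so an agent who observed $o$, holds common prior $Pr$, and faces public histogram $R$ obtains expected payment $\sum_{z \in X} Pr[z|o]\,\tau(r,z,R)$ for reporting $r$. If truthful reporting is a strict equilibrium, then for every such agent and every $r \neq o$ we must have $\sum_{z} Pr[z|o]\,\tau(o,z,R) > \sum_{z} Pr[z|o]\,\tau(r,z,R)$; the strictness is essential, since a merely weak equilibrium would only force equality (indifference) rather than a contradiction.

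Next I would build the two conflicting scenarios around a common public histogram. Take $R$ to be the uniform distribution on $X$. In scenario~1, the common prior is $Pr_1 = R$, obtained as the predictive distribution of a Dirichlet‑categorical model with parameter vector $\alpha = (1,1,\dots,1)$; observing $x_1$ yields the posterior $\pi$ with $\pi[x_1] = 2/(N+1)$ and $\pi[x_i] = 1/(N+1)$ for $i \ge 2$. In scenario~2, the common prior $Pr_2$ is the predictive distribution of the Dirichlet model with parameter vector $\beta = (4,1,2,\dots,2)$ (i.e.\ $Pr_2[x_i] = \beta_i / \sum_j \beta_j$), with the same Dirichlet‑categorical updating; observing $x_2$ then yields exactly the same posterior $\pi$. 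Two checks remain. (i) Dirichlet‑categorical updating always satisfies the self‑predicting condition of Definition~\ref{def:self-predicting}: after observing $x_k$ the relative increase of $x_k$ is $\frac{A(\alpha_k+1)}{\alpha_k(A+1)}$ with $A=\sum_j \alpha_j$, which strictly exceeds $\frac{A}{A+1}$, the common relative increase of every other value. (ii) Both priors are informed with respect to $R$ in the sense of Definition~\ref{def:informed-prior}: for $Pr_1=R$ every factor $(R[x]-Pr_1[x])$ vanishes, and for $Pr_2$ one may take the true distribution $Q_2 = R$ (or any $Q_2$ with $Q_2[x_1] \ge 1/N \ge Q_2[x_i]$ for $i\ge 2$), making each product $(R[x]-Q_2[x])(R[x]-Pr_2[x])$ nonnegative. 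Hence both scenarios are legal even under the strengthened hypotheses of common informed self‑predicting priors, and a fortiori under the unconstrained‑prior hypothesis, so a single construction settles both sentences of the proposition.

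Finally I would derive the contradiction: strict truthfulness in scenario~1 applied to the agent who observed $x_1$ gives $\sum_{z}\pi[z]\,\tau(x_1,z,R) > \sum_{z}\pi[z]\,\tau(x_2,z,R)$, whereas strict truthfulness in scenario~2 applied to the agent who observed $x_2$ gives $\sum_{z}\pi[z]\,\tau(x_2,z,R) > \sum_{z}\pi[z]\,\tau(x_1,z,R)$; these are incompatible, so no payment function works. The one place where real work is needed — the main obstacle — is producing the pair of priors: the updates must be self‑predicting, both priors must be informed against the \emph{same} public histogram $R$, and some observation under the first prior must map to precisely the same posterior as a \emph{different} observation under the second. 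The explicit Dirichlet pair above (generalising the $N=2$ choice $\alpha=(1,1)$, $\beta=(4,1)$, both giving posterior $(2/3,1/3)$) is exactly what makes all three requirements hold at once.
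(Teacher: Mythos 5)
Your proposal is correct and follows essentially the same route as the paper's own proof: both construct two Dirichlet-based common priors under which different observations ($x_1$ under $Pr_1$, $x_2$ under $Pr_2$) induce the identical posterior over the peer's report, so the same expected payments cannot strictly favor truthfulness in both scenarios. Your version is simply a concrete instance of that construction (uniform $R$, parameters $(1,\dots,1)$ and $(4,1,2,\dots,2)$), with the self-predicting and informed-prior side conditions checked explicitly rather than asserted, which is a welcome but not substantively different addition.
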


{\bf Proof}: see Appendix.

Although Proposition \ref{prop:nopriorfree} shows that with minimal mechanisms agents cannot be incentivized to always report honestly, 
Proposition \ref{prop:truthrequired} states that asymptotic accuracy requires truthfulness only when agents' prior and public prior $R$ coincide. This leads us to the concept of \textit{helpful} reporting strategies. 

\section{Helpful reporting strategies}\label{helpful-reporting-strategies}

The idea of a {\em helpful} strategy is that when the prior of an agent is far from the public prior, the agents may lie, but only in a way that brings the public histogram closer to their prior belief. From
the agent's perspective such reports help more than truthful reports
since they make the histogram converge faster to what it believes
to be the true distribution. Furthermore, 
if agents have informed priors, this type of lying strategy indeed helps the public histogram to get closer to the true distribution. 
This is conceptually similar to prediction markets \cite{PredictionMarket2}, where agents are rewarded for fixing inaccurate public information.
Note, however, that incentives in prediction markets are based on the ground truth, which never gets revealed in our setting.

We define a helpful reporting strategy by two constraints:
\begin{definition}
An agent with prior belief $Pr$ is said to follow a $\rho$-helpful reporting strategy with respect to public distribution $R$ if:
\begin{enumerate}
\item it adopts a truthful reporting strategy when the public distribution $R$ is $\rho$-close to the agent's private belief $Pr$, and
\item when the agent is not truthful, its strategy never reports an answer $x$ that is already over-represented in the current histogram:
\[
R[x] \ge Pr[x]  \Rightarrow \forall o \neq x : s(o, Pr, R) \neq x
\] 
\end{enumerate}
\label{def:helpful_strategy}
\end{definition}

Definition (\ref{def:helpful_strategy}) defines a family of strategies, not a single one. It explicitly leaves open the possibility for agents to non-truthfully report observations that are under-represented in the current public prior, and thus help drive this prior to the correct values more quickly. There are thus multiple equilibria and the mechanism
appears to be difficult to analyze. However, it turns out that as long
as agents' prior beliefs are informed with respect to the public
prior $R$, any equilibrium in helpful strategies makes $R$ converge
to the true distribution:

\begin{proposition}\label{prop:convergence}
For any true distribution $Q \in \Delta(X)$, any $\rho$-informed common prior belief $Pr[\cdot]$, the mechanism that supports $s$ as an equilibrium in $\rho$-helpful strategies is asymptotically accurate.
\end{proposition}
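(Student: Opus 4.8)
The plan is to show that, when all agents play the $\rho$-helpful equilibrium strategy $s$, the public distribution $R^{t}$ converges to $Q$ almost surely; this is stronger than the convergence in probability demanded by asymptotic accuracy, and since nothing below uses special structure of $Q$ it holds for every $Q\in\Delta(X)$. Let $n_{t}=|H^{t}|$ be the total number of reports received through round $t$; since each round contributes $M>1$ reports, $n_{t}$ grows linearly and $n_{t}\to\infty$. It is convenient to process reports one at a time, re-indexing the histogram by the running count $n$ and writing $\tilde R^{n}$ for the normalized histogram after $n$ reports. The only care needed is that the public distribution the agents of a round respond to is frozen at an earlier (start-of-round or previous-round) value, so the distribution from which the next report is actually drawn may differ from $\tilde R^{n}$ by at most $O(M/n)=O(1/n)$; this staleness is negligible throughout.

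\textbf{Step 1 (the one-step estimate).} First I would prove: for every $x\in X$ and every round, if the relevant public distribution satisfies $R[x]>Q[x]$, then the conditional probability $p_{x}$ that the next submitted report equals $x$ satisfies $p_{x}\le Q[x]$. A report equals $x$ exactly when the reporting agent's observation $o$ (drawn i.i.d.\ from $Q$) lies in $s^{-1}(x)$ for the frozen parameters $(Pr,R)$, so $p_{x}=\sum_{o\in s^{-1}(x)}Q[o]$. If $R$ is $\rho$-close to the common prior $Pr$, condition~1 of Definition~\ref{def:helpful_strategy} forces $s$ to be truthful at $(Pr,R)$, so $s^{-1}(x)=\{x\}$ and $p_{x}=Q[x]$. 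Otherwise $R$ is not $\rho$-close to $Pr$, so by $\rho$-informedness (Definition~\ref{def:informed-prior}) the prior is informed with respect to $R$; combined with $R[x]>Q[x]$, the inequality $(R[x]-Q[x])(R[x]-Pr[x])\ge0$ yields $R[x]\ge Pr[x]$, i.e.\ $x$ is over-represented in the sense of condition~2 of Definition~\ref{def:helpful_strategy}. Hence $s^{-1}(x)\subseteq\{x\}$ whether or not $s$ happens to be truthful at this point, and again $p_{x}\le Q[x]$.

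\textbf{Step 2 (one-sided asymptotic bound).} Next I would upgrade this to $\limsup_{n}\tilde R^{n}[x]\le Q[x]$ almost surely, for every $x\in X$. Writing the urn update as a stochastic-approximation recursion $\tilde R^{n+1}[x]=\tilde R^{n}[x]+\gamma_{n}(\pi_{n}-\tilde R^{n}[x])+\gamma_{n}M_{n+1}$, where $\gamma_{n}=1/(n+1+c)=\Theta(1/n)$ with $c>0$ the total initial pseudo-count mass, $\pi_{n}$ is the conditional probability (given the first $n$ reports) that report $n+1$ equals $x$, and $M_{n+1}$ (the indicator that report $n+1$ equals $x$, minus $\pi_{n}$) is a bounded martingale difference. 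Since $\sum_{n}\gamma_{n}=\infty$ and $\sum_{n}\gamma_{n}^{2}<\infty$, the noise series $\sum_{n}\gamma_{n}M_{n+1}$ converges almost surely. By Step~1 together with the $O(1/n)$ staleness estimate, on the event $\{\tilde R^{n}[x]>Q[x]+\varepsilon\}$ and for $n$ large one has $\pi_{n}-\tilde R^{n}[x]\le-\varepsilon/2$, i.e.\ the mean drift points uniformly downward above the level $Q[x]+\varepsilon$. A standard argument (the ODE method, or equivalently: the process can rise across a band only by accumulating martingale increments, whose tail sums vanish) then gives $\limsup_{n}\tilde R^{n}[x]\le Q[x]+2\varepsilon$ a.s.; letting $\varepsilon\downarrow0$ and intersecting over the finitely many $x$ yields the claim with probability one.

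\textbf{Step 3 (conservation of mass), and the main obstacle.} Finally, by compactness of $\Delta(X)$, along any convergent subsequence $\tilde R^{n_{k}}\to v$ we get $v[x]\le Q[x]$ for all $x$ from Step~2, while $\sum_{x}v[x]=1=\sum_{x}Q[x]$; hence $v=Q$, every subsequential limit equals $Q$, and therefore $\tilde R^{n}\to Q$ (equivalently $R^{t}\to Q$) almost surely, hence in probability, for every $Q\in\Delta(X)$ --- which is exactly asymptotic accuracy, the equilibrium $s$ being strict by hypothesis. I expect the main obstacle to be Step~2: the one-step estimate controls the drift only on the side where the coordinate is over-represented relative to $Q$, so full convergence must be extracted from a genuinely one-sided stochastic-approximation statement, with care needed near the threshold (where $\tilde R^{n}[x]$ is within $O(1/n)$ of $Q[x]$) and in reconciling the running histogram with the frozen within-round public distribution the agents actually respond to. The conservation-of-mass step is the device that propagates this one-sided control into convergence of the whole vector, and Step~1 --- conceptually the crux, though technically a case split --- is exactly where the informed-prior hypothesis enters, translating the $Pr$-based definition of a $\rho$-helpful strategy into a comparison of report frequencies against $Q$.
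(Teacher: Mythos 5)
Your argument is correct, and its skeleton matches the paper's: your Step~1 is exactly the paper's Lemma~\ref{lm_informed_priors} (under $\rho$-informed priors, a value with $R[x]\ge Q[x]$ is either truthfully reported because $R$ is $\rho$-close to $Pr$, or satisfies $R[x]\ge Pr[x]$ so condition~2 of Definition~\ref{def:helpful_strategy} forbids non-truthful reports of it), and your Step~3 is the paper's opening reduction, namely that the sum-to-one constraint makes one-sided control of each coordinate sufficient. Where you genuinely diverge is the probabilistic engine in Step~2. The paper stays elementary: it dominates the per-round report frequencies of $x$ by explicitly constructed random variables ($Y^t$, $Z^t$, with a cutoff time $T'$ marking the last round in which $x$ was underestimated) and applies Chebyshev's inequality to their time average, which delivers exactly the convergence in probability that asymptotic accuracy asks for. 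You instead cast the histogram update as a Robbins--Monro/urn recursion with step $\Theta(1/n)$ and bounded martingale-difference noise, and run a one-sided drift (band-crossing) argument to get $\limsup_n R^n[x]\le Q[x]$ almost surely. Your route buys a stronger conclusion (a.s.\ rather than in-probability convergence) and is more careful about the within-round staleness of the published $R$ and about heterogeneous strategies inside the helpful family, points the paper's ``worst-case scenario'' coupling treats informally; the paper's route buys self-containedness, needing nothing beyond Chebyshev, at the price of a somewhat ad hoc domination construction. The only soft spot in your write-up is that the band-crossing/ODE step is asserted rather than carried out, but it is a standard argument (negative drift above the level $Q[x]+\varepsilon$ plus a.s.\ convergence of $\sum_n\gamma_n M_{n+1}$ rules out both staying above the band and recrossing it infinitely often), so this is a matter of detail, not a gap.
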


{\bf Proof}: see Appendix.

This remarkable properties lets us avoid analyzing the equilibria in
detail and focus on incentivizing helpful reporting strategies, which
then give us asymptotic accuracy.

\section{Peer Truth Serum}

We now develop an incentive mechanism called the {\em Peer Truth Serum} (PTS) that induces helpful reporting strategies under the self-predicting condition, which is a weaker notion than self-domination. We first consider necessary conditions that any incentive mechanism must satisfy to support an equilibrium in helpful strategies.

Firstly, a mechanism that admits helpful reporting needs to satisfy the \textit{arbitrage-free} property. The intuition is simple. When
the public prior $R[\cdot]$ reflects correctly an agent's prior $Pr[\cdot]$, the mechanism
has to be truthful by the first property of helpfulness. If an agent is allowed to have arbitrary posterior beliefs (satisfying the self-prediction),
the mechanism has to be truthful even when the agent's posterior 
approaches her prior $Pr[\cdot]$. This can happen, as formally shown by the following proposition, only when the expected payoff of an agent with belief equal to $R[\cdot]$ (e.g. the agent does not make any observation) is same for all possible reports. Note that we can always make $C=0$ by
subtracting the proper amount but this is not required to incentivize
helpful strategies.   
  
\begin{proposition}
\label{prop:no-arbitrage}
{\bf (arbitrage-free)} Let $\tau$ be a payment function accepting a helpful reporting strategy as a strict equilibrium. The expected payment of an agent who reports according to the public distribution $R$ is the same for all possible answers:
\begin{equation}
\sum_{rr \in X} R[rr] \tau(r, rr, R) = C, \; \; \forall r \in X \\
\end{equation}
\end{proposition}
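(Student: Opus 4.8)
The plan is to establish the arbitrage-free identity by a limiting argument: I take an agent whose posterior belief, after observation, converges to the public distribution $R$, and use the fact that helpfulness forces truthfulness on this agent, to extract the claimed equality. Concretely, fix a report $r$ and a distinct report $r'$. I want to construct, for each possible observation $o$, a sequence of admissible (self-predicting) belief-update types whose posteriors $Pr[\cdot\,|\,o]$ approach $R[\cdot]$ as closely as desired while still keeping the self-predicting gap $\delta(o)$ strictly positive. The key point is that the Dirichlet-categorical update (mentioned in the excerpt as a canonical self-predicting update) gives exactly such a family: starting from a prior equal (or arbitrarily close) to $R$, a single observation of $o$ produces a posterior that differs from $R$ by an amount controlled by the strength of the Dirichlet concentration parameter, which can be sent to infinity, and for every finite value the self-predicting inequality $Pr[o|o]/Pr[o] > Pr[x|o]/Pr[x]$ holds strictly.

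Next I use the equilibrium hypothesis. Since $\tau$ admits a helpful reporting strategy $s$ as a strict equilibrium, the first clause of helpfulness (Definition \ref{def:helpful_strategy}) says that whenever the agent's prior $Pr$ is $\rho$-close to $R$, the agent reports truthfully; and for the agent I have built, the prior is essentially $R$ itself, so $s(o,Pr,R)=o$ for every $o$. Being a strict equilibrium, truthful reporting must strictly beat every other report against the (truthful) peer whose reports are distributed according to $R$. Therefore, for the agent who observed $o$,
\[
\sum_{rr \in X} Pr[rr\,|\,o]\,\tau(o, rr, R) \;>\; \sum_{rr \in X} Pr[rr\,|\,o]\,\tau(x, rr, R) \qquad \forall x \neq o .
\]
Now I let the posterior $Pr[\cdot\,|\,o]$ tend to $R[\cdot]$ along the constructed sequence. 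By continuity of $\tau$ in its first and second arguments (the payment function is finite-valued on a finite answer space, so each of the finitely many sums is a finite linear combination and passes to the limit), the strict inequalities become weak inequalities in the limit:
\[
\sum_{rr \in X} R[rr]\,\tau(o, rr, R) \;\ge\; \sum_{rr \in X} R[rr]\,\tau(x, rr, R) \qquad \forall x \neq o .
\]
Since $o$ here ranges over all of $X$ (I can run the argument for each observation), I get $\sum_{rr} R[rr]\tau(r,rr,R) \ge \sum_{rr} R[rr]\tau(r',rr,R)$ for every ordered pair $(r,r')$; applying this with the pair reversed yields equality, which is exactly the claim with $C := \sum_{rr \in X} R[rr]\,\tau(r,rr,R)$, a quantity depending only on $R$.

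The main obstacle I anticipate is the construction in the first paragraph: I must exhibit, for each $o$, admissible posteriors arbitrarily close to $R$ with a \emph{uniformly} valid self-predicting gap, and I must be careful that ``$Pr$ is $\rho$-close to $R$'' genuinely triggers the truthful clause of helpfulness — i.e.\ that the prior I feed the agent (not just the posterior) lies in the $\rho$-close region, so I actually take the prior to be $R$ itself (or within $\rho$) and only the posterior moves. A secondary subtlety is whether the continuity hypothesis on $\tau$ is really needed in full strength or only separate continuity in the first two arguments with $R$ held fixed; since $X$ is finite and $R$ is not varying in the limit, separate continuity suffices, but I should state this cleanly so the limit of the finite sum is justified. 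Everything else — passing strict to weak inequalities, and antisymmetrizing to get equality — is routine once the approximating family is in hand.
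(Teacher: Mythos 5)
Your proposal is correct and follows essentially the same route as the paper's proof: the paper also fixes the prior exactly at $R$, uses a scaled family of Dirichlet priors ($\alpha^k = k\alpha$) whose posteriors approach $R$ while remaining strictly self-predicting, and lets the strict truthfulness inequalities (forced by the truthful clause of helpfulness) from both observations in a pair pass to the limit to force the equality. Your antisymmetrization over ordered pairs is just the explicit form of the paper's step of combining the two limiting inequalities, and your remark that no real continuity of $\tau$ is needed (only finiteness, since the sums are finite linear combinations with $R$ fixed) is accurate.
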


{\bf Proof}: see Appendix.

Secondly, a mechanism that admits helpful reporting cannot be of an arbitrary structure - it can only reward an agent for agreement with her peer, with higher rewards assigned for a priori less likely reports. In particular, part of the reward function that depends on an agent's report is equal to $0$ if the reports of the agent  and her peer disagree, and otherwise, 
it is inversely proportional to the prior probability of the agent's report obtained from the public prior $R$. More formally: 

\begin{proposition} \label{prop:consensusonly}
{\bf (consensus only)}
Let $\tau$ be a bounded unconstrained-prior payment function accepting a helpful reporting strategy as a strict equilibrium.
Then $\tau$ can be written as $\tau(r,rr) = f(rr) + g(r,rr)$, where $g(r,rr) = 0$ when $r \neq rr$ and otherwise, when $r = rr$, $g(r, rr) = C/R[r]$. $C$ is strictly greater than $0$ and does not depend on agents' reports $r$ and $rr$.
\label{prop:consensus-only}
\end{proposition}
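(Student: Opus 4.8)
The plan is to fix the public distribution $R$ and work entirely with the section $\tau(\cdot,\cdot,R)$, abbreviating $\tau(r,rr):=\tau(r,rr,R)$ and allowing the eventual constant $C$ to depend on $R$. Since $\tau$ is unconstrained-prior, I may consider a round whose common prior belief happens to equal $R$. Then, by the first clause of Definition~\ref{def:helpful_strategy}, the accepted helpful strategy prescribes truthful reporting for every agent with prior $R$, and by the common-prior assumption the reference agent, also holding prior $R$, reports truthfully; hence an agent who observes $o$ believes the reference report $rr$ is distributed according to her posterior $Pr[\cdot|o]$. Because the equilibrium is strict, for every admissible (self-predicting, fully mixed) posterior and every $y\neq o$,
\[
\sum_{rr\in X} Pr[rr|o]\bigl(\tau(o,rr)-\tau(y,rr)\bigr) > 0 .
\]
Parametrising $Pr[x|o]=R[x]\ell(x)$, the admissible posteriors are exactly the normalisations of the vectors $\ell$ with $\ell>0$ and $\ell(o)>\ell(x)$ for $x\neq o$ (Definition~\ref{def:self-predicting}); since the inequality is homogeneous in $\ell$, I may drop the normalisation. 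Writing $\mu_{o,y}(x):=R[x]\bigl(\tau(o,x)-\tau(y,x)\bigr)$, the hypothesis becomes: for all $\ell>0$ whose strict maximum is at $o$, $\sum_x\ell(x)\mu_{o,y}(x)>0$; and Proposition~\ref{prop:no-arbitrage} (arbitrage-free) gives $\sum_x\mu_{o,y}(x)=0$.

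The crux is to squeeze structure out of this cone of inequalities. Fix distinct $r,r'$ and put $\mu:=\mu_{r,r'}$, so $\mu_{r',r}=-\mu$ and $\sum_x\mu(x)=0$. Taking $\ell(r)=1+\epsilon$ and $\ell(x)=1$ for $x\neq r$ (admissible for every $\epsilon>0$) collapses the sum to $\epsilon\,\mu(r)$, so $\mu(r)>0$. Next, fix $x_0\neq r$ and a constant $t\le 0$, and take $\ell(r)=1+\epsilon$, $\ell(x_0)=1+\epsilon t$, and $\ell(x)=1$ for the remaining $x$; for $\epsilon$ small enough this is admissible, and the sum collapses to $\epsilon\bigl(\mu(r)+t\,\mu(x_0)\bigr)$, whence $\mu(r)+t\,\mu(x_0)>0$ for every $t\le 0$ — which is possible only if $\mu(x_0)\le 0$. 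Repeating this with the roles of $r$ and $r'$ interchanged (working with $\mu_{r',r}=-\mu$) gives $\mu(x_0)\ge 0$ for every $x_0\neq r'$, so $\mu(x)=0$ for all $x\notin\{r,r'\}$, i.e.\ $\tau(r,x)=\tau(r',x)$ there since $R\in\Delta(X)$ is fully mixed; then $\mu(r')=-\mu(r)<0$ follows from $\sum_x\mu(x)=0$. I expect this paragraph — choosing the right perturbation directions inside the self-predicting cone, and combining the deviations $r\to r'$ and $r'\to r$ so as to pin each off-diagonal difference to exactly zero rather than merely one sign — to be the main technical obstacle; everything else is bookkeeping.

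It remains to assemble $\tau$. From the previous step, for a fixed ``column'' $x$ the value $\tau(r,x)$ is the same for all $r\neq x$; call it $f(x)$. Then $C(r):=\mu_{r,r'}(r)=R[r]\bigl(\tau(r,r)-f(r)\bigr)$ is strictly positive and independent of the choice of $r'\neq r$, so $\tau(r,r)=f(r)+C(r)/R[r]$. Finally, $\mu_{r,r'}(r')=-\mu_{r,r'}(r)=-C(r)$ while $\mu_{r',r}(r')=C(r')$, and $\mu_{r',r}=-\mu_{r,r'}$ forces $C(r')=C(r)$; writing $C$ for this common positive value and defining $g(r,rr):=0$ for $r\neq rr$ and $g(r,r):=C/R[r]$, we get $\tau(r,rr)=f(rr)+g(r,rr)$ with $C>0$ independent of the reports, which is the claimed decomposition.
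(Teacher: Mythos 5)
Your proof is correct and follows the same overall strategy as the paper's: specialize to a round where the common prior equals $R$ (so the accepted helpful strategy, and hence the peer, is truthful), impose strictness of truthful reporting against every self-predicting posterior together with the arbitrage-free identity of Proposition~\ref{prop:no-arbitrage}, deduce that the off-diagonal payments in each column are constant, and then assemble $f$, $g$ and a common constant $C = R[r]\,g(r,r) > 0$. The execution of the key step differs in a way worth noting. The paper's Lemma~\ref{lm:independent} engineers a single self-predicting posterior (a $\gamma$-inflation of the prior with an $\epsilon$-transfer between two off-diagonal coordinates), subtracts $(1+\gamma)$ times the no-arbitrage identity, and invokes boundedness of $\tau$ to obtain a contradiction as $\epsilon$ shrinks; positivity of $C$ is then proved separately using near-deterministic posteriors. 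You instead sweep the whole cone of admissible likelihood-ratio vectors $\ell$, use the perturbations $\ell(r)=1+\epsilon$, $\ell(x_0)=1+\epsilon t$ with $t\le 0$ and $t\to-\infty$ to pin the sign of each off-diagonal difference, and exploit both deviation directions ($r\to r'$ and $r'\to r$) to force that difference to be exactly zero; this makes boundedness unnecessary (finiteness of the individual payments suffices), and the positivity $C=\mu_{r,r'}(r)>0$ falls out of your very first perturbation, so no separate argument is needed. Like the paper, your off-diagonal step implicitly requires $|X|\ge 3$ (it needs some $x_0\notin\{r,r'\}$), which matches the restriction stated in Lemma~\ref{lm:independent}, so this is not a gap relative to the paper's own treatment.
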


{\bf Proof}: see Appendix.

These two conditions motivate a new incentive mechanism that is minimal and tolerates heterogeneous belief updating within the limits of (\ref{eq:self-predicting-condition}). 
It is asymptotically accurate when agents have a common informed
prior, and truthful when this prior is close to the public distribution $R$. We call this scheme \textit{Peer Truth Serum}:

\begin{definition}
A {\em Peer Truth Serum} (PTS) 
is a peer consistency mechanism with the payment function 
\begin{align}%
\tau(r, rr , R) = f(rr) + \begin{cases}
\frac{C}{R[r]} &\mbox{ if $r=rr$}\\
0 &\mbox{otherwise}
\end{cases}
\end{align}
where $C > 0$ does not depend on reports $r$ and $rr$, and $f(rr)$ does not depend on report $r$. 
\end{definition}

The simplest form of the mechanism is obtained by setting $C = 1$ and $f(rr) = 0$, and we often use it in the following text. It should be noted that $C$ and $f(rr)$ are important for scaling purposes. For example, by putting $f(rr) = -C$, we can make expected payments equal to $0$ for agents whose only information is the public distribution $R$ (arbitrage-free condition). Furthermore, $C$ should not be dependent on agents' reports, but can be a function of a public prior $R$.
This is useful when one wants to bound payments, so that they take values in a specific interval. For example, we can set $C = \alpha \min_x R[x]$ and $f(rr) = \beta$, so that PTS score takes values in $[\beta, \beta + \alpha]$.

We can show that the Peer Truth Serum indeed supports equilibria
in helpful strategies:

\begin{proposition}\label{prop:ptshelpful}
The Peer Truth Serum 
admits a $\rho$-helpful reporting equilibrium, whenever agents have $\rho$-informed common prior belief and updates satisfying the self-predicting condition with $\rho < \min_{o} \frac{\delta(o)}{2 + \delta(o)}$.
\label{prop:payment-function}
\end{proposition}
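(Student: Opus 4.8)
The plan is to fix a round --- that is, a public distribution $R$ together with a common prior $Pr$ satisfying the hypotheses --- and to exhibit an explicit reporting strategy that is at once $\rho$-helpful in the sense of Definition~\ref{def:helpful_strategy} and a strict equilibrium of that round's reporting game; the witness will be defined piecewise according to whether $R$ is $\rho$-close to $Pr$. Since $R$ is fixed within a round, $C>0$ is a constant and the additive term $f(rr)$ is independent of the agent's own report, so the incentive comparison is unchanged if we take $C=1$ and $f\equiv 0$: an agent who observed $o$ and assigns probability $p(y\mid o)$ to the event that her peer reports $y$ then simply maximises $p(y\mid o)/R[y]$.

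Suppose first that $R$ is $\rho$-close to $Pr$. I would show that the all-truthful profile is a strict equilibrium. If every peer reports truthfully then $p(y\mid o)=Pr[y\mid o]$, so the agent maximises $Pr[y\mid o]/R[y]$; using $R[o]\le(1+\rho)Pr[o]$ and $R[y]\ge(1-\rho)Pr[y]$, reporting $o$ strictly beats reporting any $y\neq o$ provided $\frac{Pr[o\mid o]}{Pr[o]}\cdot\frac{Pr[y]}{Pr[y\mid o]}>\frac{1+\rho}{1-\rho}$. The left-hand side is at least $1+\delta(o)$ by the self-predicting condition (Definition~\ref{def:self-predicting}), and $1+\delta(o)>\frac{1+\rho}{1-\rho}$ is algebraically equivalent to $\rho<\frac{\delta(o)}{2+\delta(o)}$, which is precisely the assumed bound. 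The same computation applies to every agent, so truthfulness is a strict equilibrium, and it is trivially $\rho$-helpful (the first clause holds, the second is vacuous).

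Now suppose $R$ is not $\rho$-close to $Pr$; then $R\neq Pr$, and since $\sum_x R[x]=\sum_x Pr[x]=1$ there is a value $z$ with $R[z]<Pr[z]$ strictly. I would take the singleton profile in which every agent reports $z$ regardless of its observation or private type. Then each peer's report is the deterministic value $z$, so an agent's payment is $1/R[z]>0$ when she also reports $z$ and $0$ otherwise, making $z$ the unique best response; hence this is a strict equilibrium (ex-post subjective, since the peer's report does not depend on its type). It is $\rho$-helpful because the first clause of Definition~\ref{def:helpful_strategy} is vacuous (we are not in the $\rho$-close case) and the second holds since the only value reported by an agent who did not observe it is $z$, which is under-represented. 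Combining the two cases, the strategy that is truthful on the $\rho$-close region and reports the (most) under-represented value elsewhere witnesses the proposition.

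The only substantive step is the inequality in the first case: isolating the factor $\frac{1+\rho}{1-\rho}$ contributed by $\rho$-closeness of $R$ to $Pr$ and checking that it is dominated by $1+\delta(o)$ exactly under the stated bound on $\rho$. A secondary pitfall to avoid is trying to retain a ``mostly truthful'' helpful profile in the far regime: once peers stop reporting truthfully such a profile need not be a best response, whereas the degenerate ``everyone reports the same under-represented value'' profile is unconditionally a strict equilibrium and is helpful by inspection. Note also that the $\rho$-informedness hypothesis (Definition~\ref{def:informed-prior}) is not actually needed to construct the equilibrium in the far regime --- only $R\neq Pr$ is used --- it is present so that Proposition~\ref{prop:convergence} can conclude asymptotic accuracy from this equilibrium.
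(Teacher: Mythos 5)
Your proposal is correct and follows essentially the same route as the paper: the identical $\rho$-closeness algebra (reducing strict truthfulness to $1+\delta(o) > \frac{1+\rho}{1-\rho}$, i.e.\ $\rho < \frac{\delta(o)}{2+\delta(o)}$) for the regime where $R$ is $\rho$-close to $Pr$, and the same singleton ``everyone reports a fixed under-reported value'' profile as the strict, trivially helpful equilibrium in the far regime. The paper additionally argues that any best response to (partially) truthful peers stays within the space of helpful strategies, but that material is not needed for the existence claim, so your omission of it is harmless.
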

\begin{proof}
We show that the payment function supports both conditions of a helpful reporting strategy for all belief updating functions that satisfy the self-predicting condition 
$\frac{Pr[o|o]}{Pr[o]} > \frac{Pr[y|o]}{Pr[y]}, o \neq y$, i.e. the posterior probability for $o$ increases the most when the agent actually observed $o$. 
Notice that function $f$ does not depend on what the agent who is rewarded reports, so it has no influence on the strategy and we
simply ignore it below.  

We first show that the mechanism has a truthful equilibrium when the public prior distribution $R$ is $\rho$-close to the private beliefs.
We observe that an agent who observes $o$ and reports $y$ expects
a reward:
\[
 \sum_{x \in X} Pr[x|o] \tau(y,x,R) =  Pr[y|o] \frac{C}{R[y]}
\]
provided that other agents are honest. 
In order for the mechanism to be strictly truthful, it has to hold for $y \neq o$:
\[
C \frac{Pr[y|o]}{R[y]} < C \frac{Pr[o|o]}{R[o]} \Leftrightarrow \frac{R[y]}{Pr[y|o]} > \frac{R[o]}{Pr[o|o]} 
\]
Using the fact that public $R$ is $\rho$-close to agents' prior $Pr$, 
we further obtain:
\[
\frac{Pr[y]}{Pr[y|o]}(1-\rho) > \frac{Pr[o]}{Pr[o|o]}(1+\rho) 
\]
Recall that the gap for value $o$ is defined as:
\[
\delta(o) = min_{y \neq o} \left ( \frac{Pr[y]}{Pr[y|o]}\frac{Pr[o]}{Pr[o|o]} -1 \right ) > 0
\]
then the truthfulness condition holds as long as $\rho$ satisfies:
\[
(\forall o) \frac{1 + \rho}{1 - \rho} < \delta(o) + 1
\]
or 
\[
\rho < \min_{o} \frac{\delta(o)}{2 + \delta(o)}
\]
Thus, when $R$ is $\rho$-close to $Pr$, the mechanism is truthful for all observations $o \in X$.
This equilibrium holds for any belief updating function $\theta$,
which means that it is an equilibrium.
This equilibrium in particular applies to belief structure 2 where
the center knows the agents' prior within a close bound $\rho$,
as well as to the case where $R$ has converged to be close enough
to the common prior.

When private beliefs are not close enough to the true distribution to
support a truthful equiilbrium, we need to consider the second condition.
Depending on the confidence of the agents, the threshold
$\rho$ for agents to be truthful varies. Thus, an agent can
expect that the peer agent is truthful with some probability $\alpha$.
Consider now the conditions for a strategy to be a best response
to a peer agent independently of the value of $\alpha$. 

Note that both the prior $Pr$ and the distribution $R$ are common and known to all agents. Thus, they all evaluate the second condition of
helpfulness in the same way, and partition the set of values into
the same set of underreported values where $R[x] \leq Pr[x]$ and overreported values where $R[x] > Pr[x]$. 

Consider first an agent who saves the effort of making an observation,
and reports according to its prior. If the peer agent is truthful,
the agent can always obtain a higher expected reward when reporting
an underreported value than an overreported value. If the peer agent
is not truthful, if it applies the same reasoning it also cannot be
expected to report an overreported value. Thus, the best response 
must be chosen within the space of helpful strategies.

Consider next an agent who has made an observation $o$ and thus reports
according to its posterior $Pr[\cdot|o]$ that it obtained by an update that satisfies the self-predicting condition. Suppose that reporting
value $x$ was not better than reporting $o$ according to the prior $Pr$:
\[
Pr[x]/R[x] \leq Pr[o]/R[o]
\]
then by the self-prediction condition we also have:
\[
Pr[x|o]/R[x] < Pr[o|o]/R[o]
\]
so that it cannot be better according to the posterior either.
Thus, the values that could be a best response under the posterior
distribution is a subset of the best responses under the prior distribution, and thus also within the space of helpful strategies.

It remains to be shown that there exist such equilibria. 
The following combination of strategies is a helpful reporting
equilibrium that can be easily implemented:
\begin{itemize}
\item if $R$ is sufficiently close to the prior $Pr$, observe
the phenomenon and report truthfully.
\item otherwise, pick the first of the underreported values according
to some order (numerical, lexicographic, etc.) and report this value,
indendently of the observation, or even without observing the phenomenon.
\end{itemize}
Such a singleton strategy is an equilibrium since any agent that deviates
from it cannot match a peer report, and thus obtains no reward at all.

\end{proof}

Untruthful helpful reporting strategies will often collapse into singleton
strategies, where the same value is reported independently of the
observation. This is because once an under-reported value $x$ is reported
in place of an observation $o$, the probability of the reference report
being equal to $x$ increases, thus making $x$ even more under-reported and causing other observations to also be misreported as $x$ in the best
response. For $k$ under-reported values, we thus have $k$ 
equilibria in helpful strategies where all agents adopt the singleton
strategy of reporting the same under-reported value. 

Note that a particular untruthful helpful equilibria is not 
necessarily the best response to a truthful peer, as the agent
may expect a higher reward from misreporting a different under-reported
value. Only when there is a single under-reported value can we
be sure that there is only a single non-truthful strategy that will
be followed by all agents. Fortunately this case is quite frequent:
besides the case where there are only 2 possible values, it is also
often the case that there is one predominantly likely value and this
will be the one that is underreported in an initial distribution.

In other cases, agents can only be certain that no peer will report an
over-reported value and thus report according to a non-helpful
strategy. Within this space, they will have to select among
multiple equilibria, as is the case with many other games as well.

In all cases, agents will use helpful strategies. No matter what
strategies they adopt, by Proposition~\ref{prop:convergence} the
mechanism will be asymptotically accurate.

Furthermore, we can show that PTS is the unique mechanism that admits helpful reporting and achieves asymptotic accuracy:
\begin{theorem}
The Peer Truth Serum is the unique arbitrage-free incentive mechanism for peer prediction that allows any heterogeneous updates satisfying the self-predicting condition and that is
\begin{itemize}
\item[a)] truthful when all agents hold the same prior distribution close to $R$ and 
\item[b)] asymptotically accurate when all agents have a prior distribution that is informed with respect to $R$.
\end{itemize}
\label{thm:pts}
\end{theorem}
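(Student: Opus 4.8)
The plan is to establish the two halves of ``the unique'' separately, where uniqueness is meant up to the free parameters $C>0$ and $f$ that already appear in the definition of PTS. For the first half — that a Peer Truth Serum has all the stated properties — arbitrage-freeness is a one-line computation: $\sum_{rr\in X} R[rr]\,\tau(r,rr,R)=\sum_{rr\in X}R[rr]f(rr)+R[r]\cdot\frac{C}{R[r]}=\sum_{rr\in X}R[rr]f(rr)+C$, which is independent of $r$; heterogeneous-update tolerance within the self-predicting restriction, and minimality, are immediate from the form of $\tau$. Property~(a) and property~(b) are supplied by the earlier results: Proposition~\ref{prop:ptshelpful} gives a $\rho$-helpful equilibrium — in particular a \emph{truthful} equilibrium once $R$ is $\rho$-close to the common prior, and here $\rho$ may be taken strictly positive because $\min_o\frac{\delta(o)}{2+\delta(o)}>0$ on a finite answer space — and Proposition~\ref{prop:convergence} upgrades any such equilibrium to asymptotic accuracy whenever the common prior is ($\rho$-)informed with respect to $R$, which an informed prior is for every $\rho$.

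For the second half, let $\tau$ be any bounded, arbitrage-free payment function that tolerates heterogeneous self-predicting updates and satisfies (a) and (b), under the paper's standing assumption of consistent pure strategies; the goal is to reduce to Proposition~\ref{prop:consensusonly}, which then pins $\tau$ down to $\tau(r,rr,R)=f(rr)+g(r,rr)$ with $g\equiv 0$ off the diagonal and $g(r,r)=C/R[r]$, $C>0$ — precisely the PTS form. To invoke that proposition I must exhibit a helpful reporting strategy that $\tau$ accepts as a strict equilibrium and argue that $\tau$ is unconstrained-prior. For the latter: for any prior $Pr$ the choice $Q=Pr$ makes $Pr$ trivially informed with respect to every $R$, since $(R[x]-Q[x])(R[x]-Pr[x])=(R[x]-Pr[x])^2\ge0$, so (b) forces $\tau$ to behave well for every prior. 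For the former: by (b) there is a consistent strategy profile $s$ that $\tau$ accepts as a strict equilibrium and for which $R^t$ converges to $Q$; the equilibrium condition for $s$ does not refer to $Q$, so the same $s$ serves every belief structure. By Proposition~\ref{prop:truthrequired}, $s$ is truthful whenever $Pr$ is close to $R$, which is the first clause of Definition~\ref{def:helpful_strategy}. It remains to check the second clause — that whenever $s$ deviates from truth-telling it never reports a value $x$ with $R[x]\ge Pr[x]$ — which I would prove by contradiction from asymptotic accuracy: if $s$ mapped some observation $o\ne x$ to an over-represented $x$ on observations of positive $Q$-mass (taking, say, $Q=Pr$), then consistency makes the induced report distribution locally constant and hence bounded away from $Q$ in coordinate $x$, contradicting $R^t\to Q$. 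Once $s$ is known to be helpful, Proposition~\ref{prop:consensusonly} finishes the argument.

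The step I expect to be genuinely delicate is exactly this last verification of the second helpfulness clause. The difficulty is that $R$ evolves as agents report while $s$ depends on $R$, so the contradiction with $R^t\to Q$ must be assembled carefully, combining the local constancy of consistent strategies with the fixed-point characterisation of the limit of the running histogram, and it must hold uniformly over the family of informed belief structures. A secondary subtlety is the quantifier order hidden in ``prior close to $R$'' in (a): the admissible closeness may shrink with the self-predicting gaps, so when harvesting constraints for Proposition~\ref{prop:consensusonly} I would restrict attention to families of updates whose gaps are bounded below, which still yields enough inequalities to force the $1/R[r]$ factor and the vanishing of the report-dependent part off the diagonal.
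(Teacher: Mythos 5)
Your proposal takes essentially the same route as the paper: the existence half is a citation of Proposition~\ref{prop:ptshelpful} together with Proposition~\ref{prop:convergence}, and the uniqueness half is a reduction to Propositions~\ref{prop:no-arbitrage} and~\ref{prop:consensusonly}, which pin the payment down to $\tau(r,rr,R)=f(rr)+\frac{C}{R[r]}$ on the diagonal, $f(rr)$ off it, with $C>0$ --- exactly the paper's (very terse) argument. The one genuine divergence is the step you yourself flag as delicate: you try to extract from asymptotic accuracy (condition b) a full $\rho$-helpful equilibrium, in particular the second clause of Definition~\ref{def:helpful_strategy}, so as to meet the literal hypothesis of Proposition~\ref{prop:consensusonly}. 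That detour is both unnecessary and, as sketched, not sound: asymptotic accuracy only constrains the strategy along realized trajectories of $R^t$, so a clause-2 violation at some $R$ far from $Pr$ that the histogram never (or only transiently) visits does not obstruct $R^t\to Q$, and your ``locally constant, hence bounded away from $Q$'' contradiction does not go through uniformly. Fortunately nothing is lost, because the proofs of Proposition~\ref{prop:no-arbitrage}, Lemma~\ref{lm:independent} and Proposition~\ref{prop:consensusonly} exploit only the strict-truthfulness constraints for agents whose common prior equals (or is arbitrarily close to) the public distribution $R$, under arbitrary self-predicting updates --- precisely what condition (a) plus heterogeneous-update tolerance supplies (note also that an everywhere-truthful strategy is vacuously helpful in its second clause); condition (b) is needed only for the existence direction. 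Your observation that choosing $Q=Pr$ makes any prior informed, so that (b) binds for every prior and yields the unconstrained-prior hypothesis, is a nice explicit rendering of something the paper leaves implicit, as is your remark that the admissible closeness in (a) may shrink with the self-predicting gap (harmless here, since the uniqueness constraints are harvested at $Pr=R$ exactly).
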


\begin{proof}
Follows from Proposition~\ref{prop:convergence} and~\ref{prop:ptshelpful}. Uniqueness follows from Propositions~\ref{prop:consensus-only} and ~\ref{prop:payment-function}, as only consensus can be rewarded and
the payment function has to be the one used in PTS.
\end{proof}

\begin{example}
To illustrate the principle of the PTS, consider a ternary answer space $X = \{x, y, z\}$, the public distribution $R[x] = \frac{1}{3}$, $R[y] = \frac{1}{3}$ and $R[z] = \frac{1}{3}$, and the true distribution $Q[x] = 0.55$, $Q[y] = 0.4$ and  $Q[z] = 0.05$. Agent $i$'s best response to truthful behaviour of her peer $j$ depends on her private beliefs. Notice that we assume agents to have informed priors $Pr$, i.e. priors that are closer to the true distribution $Q$ than $R$ is. 

For example, suppose that an agent $i$ has a posterior and prior belief as shown in Figure \ref{fig:example-pts1}. Clearly, agent $i$'s prior is more informed than $R$ is. If agent $i$ observes $c$, her best response is to report $x$ or $y$, not $z$. Namely, her expected score for reporting $z$ is $E[\tau(z, o_j, R)|o_i = z] = \frac{0.2}{1/3}$, which is less than what she expects to get for reporting $x$ or $y$, e.g. $E[\tau(x, o_j, R)|o_i = c] = \frac{0.4}{1/3}$. However, reporting $x$ or $y$ brings $R$ closer to $Q$ (by updating $R$ with the report), which is not the case if agent $i$ is honest.  

\begin{figure}[h!]
\centering
\begin{tabular}{l|c|c|c|}
 & \multicolumn{3}{c}{Observation $o_j$} \\ 
 Belief & $x$ & $y$ & $z$ \\ 
 \hline
 $Pr[o_j ]$ & 0.5 & 0.4 & 0.1 \\
 \hline
 $Pr[o_j | o_i = x]$ & 0.7 & 0.2 & 0.1  \\
 \hline
 $Pr[o_j |o_ i = y]$ & 0.4 & 0.5 & 0.1 \\
 \hline
 $Pr[o_j |o_ i = z]$ & 0.4 & 0.4 & 0.2 \\
 \end{tabular}
\caption{Self-predicting belief (1st case)}
\label{fig:example-pts1}
\end{figure}

Now, suppose that an agent $i$ has a posterior and prior belief as shown in Figure \ref{fig:example-pts2}. Since the agent's private prior is close to the public distribution $R$, the agent's best response to the honest behaviour of her peer is to report truthfully. For example, when agent $i$ observes $z$, her expected score for reporting $z$ is equal to $E[\tau(z, o_j, R)|o_i = c] = \frac{0.5}{1/3}$, while for report $y$, it is equal to $E[\tau(y, o_j, R)|o_i = c] = \frac{0.3}{1/3}$. 

\begin{figure}[h!]
\centering
\begin{tabular}{l|c|c|c|}
 & \multicolumn{3}{c}{Observation $o_j$} \\ 
 Belief & $x$ & $y$ & $z$ \\ 
 \hline
 $Pr[o_j ]$ & 1/3 & 1/3 & 1/3 \\
 \hline
 $Pr[o_j | o_i = x]$ & 0.5 & 0.3 & 0.2  \\
 \hline
 $Pr[o_j |o_ i = y]$ & 0.3 & 0.5 & 0.2 \\
 \hline
 $Pr[o_j |o_ i = z]$ & 0.2 & 0.3 & 0.5 \\
 \end{tabular}
\caption{Self-predicting belief (2nd case)}
\label{fig:example-pts2}
\end{figure}

Since both helpful reporting and honest reporting make the public distribution $R$ converge towards the true distribution $Q$, it is intuitive that the mixture of the two strategies should also make $R$ converge towards $Q$. Therefore, although the PTS is not necessarily a truthful mechanism, it incentivizes agents to provide reports whose histogram approximates the true distribution $Q$.    

\end{example}

\section{Feasibility results}

The asymptotic accuracy of the PTS mechanism requires that we limit agents beliefs in two ways, by assuming that agents have informed priors and self-predicting updates. Although one might be tempted to relax these conditions, we prove that this is not possible. 
Namely, considering agents that use consistent pure strategies,
Theorem \ref{thm:pts} and Proposition \ref{prop:truthrequired} tell us that all asymptotically accurate mechanism have a form of the peer truth serum. 
We can, thus, use this fact to prove several impossibility results that explain why the conditions of Theorem \ref{thm:pts} cannot be further relaxed. 

We first state that some restriction on how agents update their priors is needed for asymptotic accuracy, e.g. the self-predicting condition. The claim is proven by using the result of Proposition \ref{prop:consensus-only}: an asymptotically accurate mechanism has to have a specific form, with the reward of an agent dependent on her report through the function $g(r, rr)$. Since truthfulness is required when the prior $Pr$ is equal to the public prior $R$, one can easily show that $\frac{Pr[x|x]}{Pr[x]} > \frac{Pr[y|x]}{Pr[y]}$ has to hold. In other words, agents cannot have arbitrary belief updating functions.

\begin{theorem}
There is no incentive scheme that allows unrestricted heterogeneous updates and is asymptotically accurate.
\label{theo:no-truthful}
\end{theorem}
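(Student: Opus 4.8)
The plan is to argue by contradiction: assume some incentive scheme $\tau$ is asymptotically accurate and places \emph{no} restriction at all on the agents' belief updates, and show that its reward structure is nevertheless forced to be that of a Peer Truth Serum, which silently enforces the self-predicting condition — contradicting the assumed unrestrictedness.

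First I would invoke Proposition~\ref{prop:truthrequired}. Since $\tau$ is asymptotically accurate, has a continuous payment function that does not depend on the true distribution $Q$, and agents use consistent pure strategies, there is an equilibrium strategy $s$ that is truthful on a neighbourhood of the diagonal, i.e.\ $s(o,Pr,R)=o$ for all $o$ whenever $\lVert Pr, R\lVert<\epsilon$. A strategy that is identically truthful on such a neighbourhood is in particular a $\rho$-helpful reporting strategy: its first clause holds because the strategy is truthful whenever $R$ is close to $Pr$, and its second clause is vacuous because the agent is never non-truthful there. Hence Proposition~\ref{prop:consensusonly} applies, and $\tau$ must have the form $\tau(r,rr,R)=f(rr)+g(r,rr)$ with $g(r,rr)=0$ for $r\neq rr$ and $g(r,r)=C/R[r]$ for some $C>0$ independent of the reports — that is, $\tau$ is a PTS. (Here I take for granted the standing regularity hypotheses under which the quoted propositions are stated: the payment function is bounded, continuous, $Q$-independent, and unconstrained-prior.)

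Next I would compute the expected reward of an agent who observed $o$, reports $y$, and faces a truthful peer at a point with $R$ close to $Pr$. Using the PTS form, and the fact that $f$ does not depend on the agent's own report,
\[
\sum_{x\in X} Pr[x|o]\,\tau(y,x,R) \;=\; \big(\text{a quantity independent of }y\big) \;+\; C\,\frac{Pr[y|o]}{R[y]} .
\]
Because $s$ is a \emph{strict} equilibrium and $s(o,Pr,R)=o$, reporting $o$ must strictly beat every $y\neq o$, so $Pr[o|o]/R[o] > Pr[y|o]/R[y]$ for all $y\neq o$. Keeping this inequality for $R$ in a small neighbourhood of $Pr$ and letting $R\to Pr$ shows that the agent's update is forced to satisfy the self-predicting condition~(\ref{eq:self-predicting-condition}), $Pr[o|o]/Pr[o] > Pr[y|o]/Pr[y]$. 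But "unrestricted heterogeneous updates" permits an admissible type $\theta$ whose update violates this by a strictly positive margin — e.g.\ one for which some $y\neq o$ has $Pr[y|o]/Pr[y] > Pr[o|o]/Pr[o]$ — and by continuity, for $R$ sufficiently close to $Pr$ such an agent strictly prefers to report $y$ rather than $o$, contradicting that $s(o,Pr,R)=o$ is part of the equilibrium. Hence no such mechanism exists.

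I expect the main obstacle to be bookkeeping rather than anything deep: first, verifying that a truthful-on-a-neighbourhood equilibrium genuinely meets the (somewhat delicately stated) definition of a helpful reporting strategy so that Proposition~\ref{prop:consensusonly} is legitimately applicable; and second, handling the strict-versus-nonstrict inequality when passing to the limit $R\to Pr$ — it is cleanest to fix once and for all a single admissible type $\theta$ that violates self-prediction by a strictly positive margin and use continuity of $\tau$ in $R$, rather than to try to argue at the degenerate point $R=Pr$ exactly.
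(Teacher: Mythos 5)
Your proposal is correct and follows essentially the same route as the paper: reduce the mechanism to the PTS form via Proposition~\ref{prop:truthrequired} and Proposition~\ref{prop:consensusonly}, then observe that truthfulness when $Pr$ is (close to) $R$ forces the self-predicting inequality $Pr[o|o]/Pr[o] > Pr[y|o]/Pr[y]$, which an unrestricted update type can violate, yielding a contradiction. The only cosmetic difference is that the paper packages the final contradiction as conflicting sign requirements on $C$ (one agent whose update satisfies self-prediction forces $C>0$, one whose update violates it forces $C<0$), whereas you use $C>0$ directly from Proposition~\ref{prop:consensusonly} and exhibit a violating type.
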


{\bf Proof}: see Appendix.

We can show a similar result for the prior beliefs of the agents: in order to achieve asymptotic accuracy, the agents' prior belief needs to satisfy the informed prior condition. The intuition behind the result is the following. If agents are allowed to have an uninformed prior beliefs, the relationship between the public prior $R$, the private prior $Pr$ and the true distribution $Q$ can be one of the following: either $R$ is closer to $Q$ or $Pr$ is closer to $Q$. The agent, however, cannot distinguish between these two cases, so her report will innevitabily push $R$ further away from $Q$ in one of the two cases above.

\begin{theorem}
There is no unconstrained-prior asymptotically accurate incentive mechanism for self-predicting heterogeneous update functions.
\label{theo:no-general-prior}
\end{theorem}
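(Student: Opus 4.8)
The plan is to derive a contradiction from the assumption that such a mechanism $\tau$ exists, by exhibiting two agents whose priors and self-predicting posteriors force them to report the same value while truthfulness demands they report different ones — the same collision phenomenon that drove Proposition~\ref{prop:nopriorfree}, but now combined with the structural constraints forced by asymptotic accuracy. First I would invoke the earlier machinery: by Proposition~\ref{prop:truthrequired}, under consistent pure strategies any asymptotically accurate mechanism with a continuous $Q$-independent payment function has a truthful equilibrium whenever $Pr$ is close to $R$; and by Proposition~\ref{prop:consensus-only} (applied with the unconstrained-prior hypothesis) $\tau$ must have the form $\tau(r,rr)=f(rr)+g(r,rr)$ with $g(r,rr)=0$ for $r\neq rr$ and $g(r,r)=C/R[r]$ for some $C>0$. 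So without loss of generality the mechanism is the Peer Truth Serum, and the question reduces to: can PTS be asymptotically accurate while placing \emph{no} restriction on priors (only requiring self-predicting updates)?

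The core of the argument is to pick a public distribution $R$ and then construct two agents, agent~1 with prior $Pr_1$ and observation $x$, agent~2 with prior $Pr_2$ and observation $y$ (with $x\neq y$), whose posteriors satisfy $Pr_1[\cdot\mid x]=Pr_2[\cdot\mid y]$ as distributions, while both updates are self-predicting with respect to their respective priors. Concretely, I would fix the common posterior $\pi$ and then choose $Pr_1$ so that $\pi[x]/Pr_1[x]$ is the strictly largest ratio (making agent~1's update self-predicting for $x$), and independently choose $Pr_2$ so that $\pi[y]/Pr_2[y]$ is the strictly largest ratio (making agent~2's update self-predicting for $y$) — these are independent constraints on $Pr_1$ and $Pr_2$ and are easy to satisfy simultaneously since they involve different coordinates. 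Against a truthful peer, agent~$i$'s expected PTS reward for reporting $z$ is $\pi[z]\cdot C/R[z]$, which depends only on the common posterior $\pi$ and on $R$, not on the private prior. Hence both agents have exactly the same expected-reward profile over reports, so they are incentivized to report the same value $z^\star=\arg\max_z \pi[z]/R[z]$. But truthfulness (required when, say, agent~1's prior happens to be close to $R$ — and I have the freedom to arrange $Pr_1\approx R$ while keeping the update self-predicting) would demand agent~1 report $x$; choosing the construction so that $x\neq z^\star$ or $y\neq z^\star$ gives the contradiction. Since by the reduction above every asymptotically accurate mechanism is PTS-shaped, this rules out \emph{all} unconstrained-prior asymptotically accurate mechanisms.

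The main obstacle is the bookkeeping that makes the collision genuinely incompatible with asymptotic accuracy rather than merely with strict truthfulness: asymptotic accuracy is a weaker requirement (helpful strategies are allowed), so I cannot simply say "two agents must report the same value, hence one lies." I need to argue that when priors are unconstrained, the direction of any non-truthful (helpful) report can be made to push $R$ \emph{away} from $Q$. The clean way to do this, as the paragraph preceding the theorem in the excerpt hints, is: take an agent who cannot tell whether $R$ or $Pr$ is closer to $Q$ — build two scenarios with the same $R$, the same agent-observed posterior, but true distributions $Q$ and $Q'$ lying on opposite sides of $R$ in some coordinate. The agent's report (truthful or helpful, it does not matter which, since the report is a deterministic function of $o$, $Pr$, $R$ by consistency) is the same in both scenarios, so it corrects $R$ toward $Q$ in one and corrupts it away from $Q'$ in the other; in the bad scenario $R^t$ cannot converge to the true distribution, contradicting asymptotic accuracy. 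Making the two scenarios' posteriors coincide while keeping both updates self-predicting, and verifying the $L_1$ divergence persists in the limit (the reported value is systematically wrong by a bounded amount in one scenario), is the delicate part; everything else follows from Propositions~\ref{prop:truthrequired} and~\ref{prop:consensus-only}.
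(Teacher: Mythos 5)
Your overall skeleton matches the paper's: reduce any asymptotically accurate mechanism to the PTS form via Propositions~\ref{prop:truthrequired} and~\ref{prop:consensus-only}, then show PTS itself fails once the informed-prior restriction is dropped, because an agent cannot tell whether $R$ or her own prior is closer to $Q$. However, the step you defer as ``the delicate part'' is in fact the entire content of the paper's proof, and without it the argument does not close. What has to be exhibited is a concrete configuration of a common (uninformed) prior $Pr$, a self-predicting posterior, and a public distribution $R$ such that (i) the update is self-predicting \emph{with respect to $Pr$}, yet (ii) because PTS divides by $R$ rather than by $Pr$, the strict best response to truthful peers is to misreport some observation; the paper does this explicitly with $Pr[y]=R[y]-\epsilon$, $Pr[z]=R[z]+\epsilon$ and $Pr[y|y]=(Pr[y]+\delta)k$, $Pr[z|y]=(Pr[z]-\delta)k$, $0<\delta\ll\epsilon$, so that the agent observing $y$ strictly prefers to report $z$; and then (iii) the true distribution is chosen adversarially, e.g.\ $Q=R^t$, which is legitimate exactly because the prior is unconstrained, so the systematic misreporting of $y$ as $z$ keeps the report frequencies bounded away from $Q$ and $R^t$ cannot converge. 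Your sketch asserts such a configuration exists but never verifies the compatibility of the self-predicting condition (measured against $Pr$) with the misreporting incentive (measured against $R$); that tension is the only non-trivial verification in this theorem, and it is missing.

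Two further problems. The posterior-collision detour modeled on Proposition~\ref{prop:nopriorfree} does not yield the contradiction you claim: the only truthfulness forced by Proposition~\ref{prop:truthrequired} is for the agent whose common prior is close to $R$, and for that agent PTS with a self-predicting update is already truthful (Proposition~\ref{prop:ptshelpful}), so $z^\star$ coincides with her observation; the agent whose prior is far from $R$ misreporting contradicts nothing, since asymptotic accuracy tolerates non-truthful (helpful-style) reports there. Giving the two agents different priors also strays into the territory of Proposition~\ref{prop:common-prior} rather than this theorem; a single common uninformed prior suffices. Finally, the parenthetical ``truthful or helpful, it does not matter which'' is false: if the equilibrium report is truthful, reports are samples of the true distribution and $R^t$ converges in \emph{both} of your scenarios, so the indistinguishability argument only bites after you have produced a configuration where the equilibrium report is provably non-truthful --- which is again the construction you left out.
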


{\bf Proof}: see Appendix.

Therefore, the main results of this paper include two theorems regarding the feasibility of minimal incentive mechanisms for peer prediction: Theorems~\ref{theo:no-truthful} and~\ref{theo:no-general-prior}. Together with Theorems~\ref{theo:self-dominating-exists} and~\ref{thm:pts}, these theorems define the space of feasible incentive mechanisms 
with respect to different degrees of unconstrained-priorness and tolerance to heterogeneous updates, as shown in Figure~\ref{fig:feasibility}.

\begin{figure*}
\centering

\begin{tabular}{|l|c|c|c|}
 & \multicolumn{3}{c}{Belief Updates} \\ 
 Prior & self-dominating & self-predicting & any \\ 
 \hline
 = R & truthful(Th~\ref{theo:self-dominating-exists}) & truthful(Th.~\ref{thm:pts}) & not asympt. acc. (Th.~\ref{theo:no-truthful}) \\
 \hline
 informed & truthful(Th~\ref{theo:self-dominating-exists}) & asympt. acc.(Th.~\ref{thm:pts}) & not asympt. acc. (Th.~\ref{theo:no-truthful}) \\
 \hline
 unconstrained-prior & truthful(Th~\ref{theo:self-dominating-exists}) & not asympt. acc. (Th.~\ref{theo:no-general-prior}) & not asympt. acc. (Th.~\ref{theo:no-truthful}) 
 \end{tabular}
\caption{Feasibility space for minimal mechanisms with heterogeneous belief updates.}
\label{fig:feasibility}
\end{figure*}

\section{Specific results for binary answer spaces}\label{sec:binary}

The requirement of Theorem \ref{thm:pts} is that agents at each round $t$ share a common prior belief, not necessarily known to the mechanism. The condition is tight, as shown by Proposition \ref{prop:common-prior}. The result is based on the fact that each agent might have a prior significantly different than the public prior $R$, but be convinced that the others have prior equal to $R$ and, hence, are truthful. Then we can construct a scenario with a non-binary answer space $\{x, y, z\}$, where: 
\begin{itemize}
\item $x$ is always truthfully reported when it is observed;
\item for $y$, public prior $R[y]$ oscillates around the true frequency $Q[y]$, but $x$ is reported instead of $y$ when $R[y] > Q[y]$;
\item $y$ is reported instead of $z$ when $R[y] < Q[y]$.
\end{itemize}
Although an over-reported value is never reported instead of an under-reported value, in the described scenario we achieve almost the same effect. This further implies that the frequency of reports $x$ is strictly greater than $Q[x]$, while the frequency of reports equal to $z$ is strictly less than $Q[z]$, so $R$ does not converge towards $Q$. 

\begin{proposition}\label{prop:common-prior}
Suppose agents are allowed to have different prior beliefs. Then, there is no asymptotically accurate incentive mechanism 
for self-predicting heterogeneous update functions, informed priors and the answer space $|X| \ge 3$.
\end{proposition}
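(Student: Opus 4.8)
The plan is to argue by contradiction and lean on the characterization already in hand. Suppose some mechanism $\tau$ were asymptotically accurate for agents that use consistent pure strategies and that are allowed heterogeneous but informed priors together with self-predicting updates. Proposition~\ref{prop:truthrequired} would give it an equilibrium strategy that is truthful whenever the common prior equals $R$, and the characterization results of Propositions~\ref{prop:no-arbitrage} and~\ref{prop:consensus-only} would then force $\tau$ into PTS form, $\tau(r,rr,R)=f(rr)+\mathbf{1}[r=rr]\,C(R)/R[r]$ with $C(R)>0$. So it suffices to exhibit, for the PTS, one admissible belief configuration on $X=\{x,y,z\}$ under which $R^{t}$ cannot converge to $Q$ in any (ex-post subjective) equilibrium.

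Next I would build that configuration. Fix $Q$ with $Q[x],Q[y]>0$ and $Q[z]$ small (say $Q=(0.55,0.4,0.05)$, as in the running example). To each agent arriving in round $t$ I would assign a private prior $Pr$ informed with respect to the current public distribution $R^{t}$ together with a self-predicting update --- both realizable through a Dirichlet-type update like the ones used in the proofs of Propositions~\ref{prop:nopriorfree} and~\ref{prop:no-arbitrage} --- chosen so that two things hold. First, the agent believes all its peers hold prior exactly $R^{t}$; this is an admissible belief, since a prior equal to $R$ is trivially informed with respect to $R$, so the agent believes its peers report truthfully (PTS is truthful when the prior equals $R$). Second, since under that belief the agent's expected PTS reward for reporting $r$ after observing $o$ is proportional to $Pr[r\mid o]/R^{t}[r]$, I would tune $Pr$ and its update so that the strict maximiser is: report $x$ when $o=x$; report $x$ when $o=y$ and $R^{t}[y]>Q[y]$, and $y$ otherwise; report $y$ when $o=z$ and $R^{t}[y]<Q[y]$, and $z$ otherwise.

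With these strategies in place I would analyse the stochastic evolution of $R^{t}$. Its $y$-coordinate is self-stabilising around $Q[y]$: when $R^{t}[y]>Q[y]$ the $y$-observers switch to $x$, so the instantaneous frequency of $y$-reports is $0<Q[y]$ and $R^{t}[y]$ is pushed down; when $R^{t}[y]<Q[y]$ the $z$-observers switch to $y$, so the instantaneous frequency of $y$-reports is $Q[y]+Q[z]>Q[y]$ and $R^{t}[y]$ is pushed up. Hence almost surely a strictly positive asymptotic fraction $p_{+}>0$ of rounds satisfy $R^{t}[y]>Q[y]$, and on exactly those rounds $x$ collects report frequency $Q[x]+Q[y]$ rather than $Q[x]$, while $x$ is never reported below its true frequency. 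By the law of large numbers the limiting frequency of $x$-reports is therefore at least $Q[x]+p_{+}Q[y]>Q[x]$, so $R^{t}[x]\not\to Q[x]$ (and symmetrically the frequency of $z$-reports stays strictly below $Q[z]$), contradicting asymptotic accuracy. This is where $|X|\ge 3$ is genuinely needed: with two values the competing ``helpful'' transfers collapse and the oscillation cannot be set up, consistent with the positive results of this section for $|X|=2$.

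The hard part will be in two places. First, producing explicit informed, self-predicting priors whose PTS best responses implement exactly the prescribed switching at the boundary $R^{t}[y]=Q[y]$ while staying informed with respect to $R^{t}$ throughout the band in which $R^{t}[y]$ oscillates --- this is precisely where the ``$\ge 0$'' slack in the definition of informedness must be exploited, and balancing the $1/R^{t}[r]$ weights against the posterior ratios near the threshold is delicate. Second, turning the heuristic ``oscillation $\Rightarrow$ positive fraction on the high side $\Rightarrow$ overshoot of $R^{t}[x]$'' into a rigorous recurrence/drift argument for the $y$-coordinate plus a bound keeping $R^{t}[x]$ away from $Q[x]$; and, since asymptotic accuracy only demands \emph{some} converging strict equilibrium, arguing that the ``extra $x$-reports whenever $R[y]>Q[y]$'' feature is forced in \emph{every} strict equilibrium of the induced game --- which follows because the mechanism is PTS, so every strict best response is helpful with respect to the agent's own prior, and these priors leave no better alternative.
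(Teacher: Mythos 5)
Your proposal is correct and takes essentially the same route as the paper: reduce via the characterization results to the PTS form, then build a three-value scenario in which each agent believes its peers hold prior exactly $R^{t}$ (hence are truthful) while its own informed, self-predicting beliefs make it report $x$ for $o=y$ when $R^{t}[y]>Q[y]$ and $y$ for $o=z$ when $R^{t}[y]<Q[y]$, so that $R[y]$ oscillates around $Q[y]$, $x$ is over-reported and $z$ under-reported, contradicting asymptotic accuracy. The paper's appendix just instantiates the priors and posteriors explicitly with an $\epsilon$--$\delta$ construction and treats the resulting non-convergence at roughly the same level of informality as your sketch.
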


{\bf Proof}: see Appendix.

Notice, however, that the proposition assumes non-binary answer spaces. As we show below, the common prior requirement is not needed when the answer space is binary. 
In fact, in a binary setting many conditions that are discussed in the paper are satisfied under reasonable updating assumptions. For example, whenever the same observations from different agents are positively correlated, the self-predicting condition holds.

\begin{definition}
We call observation $o$ {\em indicative} if and only if $Pr[o|o] > Pr[o]$. 
\end{definition} 

\begin{lemma}\label{lm_binary_sp}
Suppose $|X| = 2$. If observations are indicative, then the self-predicting condition holds.
\end{lemma}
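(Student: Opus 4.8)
The plan is to exploit the fact that with a binary answer space $X = \{x_1, x_2\}$, probabilities must sum to one, so $Pr[x_2] = 1 - Pr[x_1]$ and $Pr[x_2 \mid o] = 1 - Pr[x_1 \mid o]$ for any conditioning observation $o$. Suppose without loss of generality that the observed value is $o = x_1$; we must show that the self-predicting condition \eqref{eq:self-predicting-condition} holds, i.e.\ $Pr[x_1 \mid x_1]/Pr[x_1] > Pr[x_2 \mid x_1]/Pr[x_2]$, given that $x_1$ is indicative, i.e.\ $Pr[x_1 \mid x_1] > Pr[x_1]$.

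First I would rewrite the target inequality using the binary constraint. Set $p = Pr[x_1]$ and $q = Pr[x_1 \mid x_1]$, so that $Pr[x_2] = 1-p$ and $Pr[x_2 \mid x_1] = 1-q$. The indicativeness hypothesis is exactly $q > p$. The self-predicting inequality becomes $q/p > (1-q)/(1-p)$. Cross-multiplying (both $p \in (0,1)$ and $1-p \in (0,1)$ are strictly positive because $Pr \in \Delta(X)$ is fully mixed), this is equivalent to $q(1-p) > p(1-q)$, which simplifies to $q - qp > p - pq$, i.e.\ $q > p$ — precisely the indicativeness assumption. Hence the self-predicting condition holds for observation $x_1$, and by the symmetric argument (swapping the roles of $x_1$ and $x_2$) it holds for observation $x_2$ as well, completing the proof.

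There is essentially no obstacle here: the lemma reduces to the elementary observation that for $a, b \in (0,1)$, $a > b \iff a/(1-a) > b/(1-b)$ (monotonicity of the odds transformation), combined with the binary normalization. The only minor care needed is to invoke that all priors and posteriors lie in $\Delta(X)$ so every denominator is strictly positive, which legitimizes the cross-multiplication and the division by $Pr[x_2]$ and $Pr[x_2 \mid x_1]$.
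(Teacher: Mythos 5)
Your proof is correct and follows essentially the same route as the paper's: both exploit the binary normalization $Pr[x_2\mid x_1]=1-Pr[x_1\mid x_1]$, $Pr[x_2]=1-Pr[x_1]$ and reduce the self-predicting inequality to the indicativeness hypothesis by elementary algebra (the paper compares both ratios to $1$ using $\epsilon = Pr[o\mid o]-Pr[o]$, while you cross-multiply, which is the same calculation). No gaps.
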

\begin{proof}
Since $Pr[o|o] = Pr[o] + \epsilon$, with $\epsilon > 0$, and $Pr[o|o] + Pr[x|o] = 1$ for $x \ne o$, it follows that:
\begin{align*}
\frac{Pr[o|o]}{Pr[o]} = 1 + \frac{\epsilon}{Pr[o]}
> 1 > 1- \frac{\epsilon}{Pr[x]} = \frac{Pr[x] - \epsilon}{Pr[x]} = \frac{Pr[x|o]}{Pr[x]}
\end{align*}
\end{proof}

The self-prediction alone is not enough to prove asymptotic accuracy of PTS applied to a binary scenario. However, if agents' priors are informed, then the peer truth serum is asymptotically accurate, even when agents do not have common prior. 

\begin{proposition}
Consider a binary answer space $X = \{x, y\}$ and agents who have informed priors, not necessarily common, and indicative observations. 
Then the peer truth serum admits an equilibrium in which each agent either reports honestly or the under-reported value $\bar x \in X$ for which $Q[\bar x] \le R[\bar x]$.
Consequently, the peer truth serum is asymptotically accurate.
\end{proposition}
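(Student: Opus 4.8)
The plan is to reduce the claim to a single round, exhibit a stage-game equilibrium of the stated form given the current public histogram $R=R^t$, and then let $R^t$ evolve to $Q$. Fix $R=R^t$; if $R=Q$ there is nothing to prove, so assume $R\neq Q$ and let $\bar x$ be the value with $R[\bar x]\le Q[\bar x]$ --- in the binary case this is the unique under-represented value (up to the non-generic equality $R=Q$) --- and let $x$ be the other value. The first step is the observation that, although the agents' priors may differ, they all single out this \emph{same} $\bar x$: by Definition~\ref{def:informed-prior}, $(R[\bar x]-Q[\bar x])(R[\bar x]-Pr_i[\bar x])\ge 0$ for every agent $i$, and since $R[\bar x]-Q[\bar x]\le 0$ this forces $R[\bar x]\le Pr_i[\bar x]$, i.e.\ $\bar x$ is the value maximizing $Pr_i[v]/R[v]$ for every $i$ --- a quantity each agent can compute from its own prior and the public $R$ without knowing $Q$. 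Equivalently $R[x]\ge Pr_i[x]$, so $x$ is over-represented relative to every agent's prior.

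Next I would compute expected PTS payments, taking $C=1$ and $f\equiv 0$ (harmless, since $f(rr)$ contributes a report-independent constant and $C>0$ only rescales): an agent who observed $o$ and reports $r$ against a peer whose report it believes is distributed as $\pi$ expects $\pi(r)/R[r]$, so against a truthful peer it expects $Pr_i[r|o]/R[r]$ and reports the $r$ maximizing this. By Lemma~\ref{lm_binary_sp} indicative observations are self-predicting, hence $Pr_i[\bar x|\bar x]/Pr_i[\bar x]>Pr_i[x|\bar x]/Pr_i[x]$ and $Pr_i[x|x]/Pr_i[x]>Pr_i[\bar x|x]/Pr_i[\bar x]$; combined with $R[\bar x]\le Pr_i[\bar x]$ this gives $Pr_i[\bar x|\bar x]/R[\bar x]>1>Pr_i[x|\bar x]/R[x]$, so an agent who observes $\bar x$ strictly prefers to report $\bar x$ (truthfully), while an agent who observes $x$ reports either its truthful answer $x$ or the under-represented value $\bar x$. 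In particular no agent ever reports $x$ in place of an observed $\bar x$: every non-truthful report lands on the under-represented value $\bar x$, which is exactly condition~2 of a helpful strategy (Definition~\ref{def:helpful_strategy}); condition~1 (truthfulness when $R$ is $\rho$-close to $Pr_i$) follows for $\rho$ small from the self-prediction gap, exactly as in the proof of Proposition~\ref{prop:ptshelpful}.

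The remaining and most delicate point is that a genuine equilibrium of this form exists even though agents have heterogeneous private update types and need not share beliefs about peers: here the usual Bayesian / ex-post-subjective reasoning does not directly apply, because different agents would best-respond to a truthful peer with different reports. I would sidestep this by exhibiting the singleton profile in which \emph{every} agent reports $\bar x$ (the value it already identified above): against $M-1\ge 1$ peers all reporting $\bar x$, any unilateral deviator fails to match and forfeits the entire $C/R[\cdot]$ term, so reporting $\bar x$ strictly dominates, giving a strict equilibrium. In it each agent reports honestly exactly when it observed $\bar x$ and otherwise reports the under-represented value $\bar x$, as claimed; and when $R$ is $\rho$-close to all agents' priors, the all-honest profile is also an equilibrium by the previous paragraph, so in either case we obtain a helpful-strategy equilibrium.

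Finally, for asymptotic accuracy: in this equilibrium every round appends exactly $M$ copies of the currently under-represented value to the histogram, so, writing $n^t$ for the total number of reports received by round $t$, the (now deterministic) update moves $R^t$ toward $Q$ in the deficient coordinate and never overshoots by more than $M/(n^t+M)\to 0$; since $n^t\to\infty$, a standard discrete stochastic-approximation / P\'olya-urn argument gives $R^t\to Q$ in $L_1$, so the asymptotic-accuracy limit holds trivially. Alternatively one may invoke Proposition~\ref{prop:convergence}, noting that its proof uses only that non-truthful reports hit values under-represented in $R$, which the second paragraph supplies agent by agent, so its common-prior hypothesis is not actually needed in the binary setting. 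The main obstacle is thus conceptual rather than computational --- getting a clean equilibrium notion to apply in the absence of a common prior, which the singleton construction resolves; the only loose ends are the non-generic boundary cases $R^t=Q$ and $Pr_i=R^t$, dispatched by a fixed tie-breaking convention.
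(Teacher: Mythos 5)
Your argument is correct in substance and its analytical core coincides with the paper's: the chain $Pr[\bar x|\bar x]/R[\bar x] \ge Pr[\bar x|\bar x]/Pr[\bar x] > Pr[x|\bar x]/Pr[x] \ge Pr[x|\bar x]/R[x]$, with the outer inequalities from informedness (Definition~\ref{def:informed-prior}) and the middle one from Lemma~\ref{lm_binary_sp}, so that an agent observing the value under-represented in $R$ never lies onto the over-represented one; you also resolve the inequality in the statement the way it must be intended ($R[\bar x]\le Q[\bar x]$), as the paper's proof implicitly does. Where you genuinely differ is the equilibrium-existence step. Because priors are not common, the paper does not check best responses against a truthful peer only: it bounds the peer's report distribution under \emph{any} strategy of the stated form (the peer reports the over-represented value with probability at most $Pr[x|o]$ and the under-represented one with probability at least $Pr[\bar x|o]$), so the chain above shows honesty is strictly optimal for an observer of $\bar x$ ex post in the peer's behavior, and any profile in which the other agents best-respond within the class is then an equilibrium of the stated form, feeding directly into the convergence argument. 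You instead verify best responses only against a truthful peer and then manufacture an equilibrium by hand: the all-$\bar x$ singleton (strict by the matching argument, and well defined for all agents by your informedness observation) or the all-honest profile when $R$ is $\rho$-close to every prior --- essentially the singleton device the paper itself uses in Proposition~\ref{prop:ptshelpful}. This buys a very short existence argument, but your two-regime split is not exhaustive under heterogeneous priors: when $R$ is close to some agents' priors and far from others, the profile you describe (honest for the former, singleton for the latter) is no longer supported by the matching argument, since a singleton player whose peer may be honest and who observed $x$ need not prefer $\bar x$. The fix is immediate --- drop the honest clause and use the always-singleton profile, or import the paper's ex-post bound --- but as written that case is a small gap. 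Your convergence step is fine and matches the paper: as you note, the proof of Proposition~\ref{prop:convergence} only uses that non-truthful reports land on values with $R[x]\le Q[x]$, which informedness supplies agent by agent, and the paper invokes exactly that argument.
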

\begin{proof}
Without loss of generality we can assume that $Q[x] \ge R[x]$ and $Q[y] \le R[y]$. 
The first statement is now equivalent to the claim that $x$ will not be untruthfully reported instead of observation $o=y$.
Since the prior is not common, an agent has to reason about other agents' beliefs. 
However, due to informed priors, we know that the posterior belief of an agent regarding her peer 
reporting $x$ is less than or equal to $Pr[x|o]$ and her peer reporting $y$ is greater than or equal to $Pr[y|o]$.
Therefore, for observation $o = y$, we know that the expected payoff is at least 
$Pr[y|y]/R[y] \ge Pr[y|y]/Pr[y] > Pr[x|y]/Pr[x] \ge Pr[x|y]/R[x]$,
where the first and the third inequalities are due to the informed priors, and the second inequality is due to Lemma \ref{lm_binary_sp}. Notice that
the expected payoff for reporting $x$ is at most $Pr[x|y]/R[x]$.
In other words, the expected payoff for reporting $y$ is strictly greater than 
for reporting $x$, so agents who observe $y$ will report $y$, hence, helpful reporting is an equilibrium strategy.  

Asymptotic accuracy follows by the same argument for asymptotic accuracy of helpful reporting strategies in \ref{prop:convergence}. 
\end{proof}

\section{Optimality of the Peer Truth Serum}
\label{sec:optimality}

We have derived the Peer Truth Serum on the basis of desirable incentive
properties only. We now show that it fortuitously is also the unique
mechanism that optimizes the accuracy of the resulting predictions.

To measure the prediction quality of a public prior $R^t$ at time $t$, i.e. the utility of the center at time $t$, we use a scoring rule on a random sample $x_s$ from the true distribution $Q$.
In particular, we consider the logarithmic scoring rule (\cite{logscore}):
\begin{align*}
S(R, x_s) = \log(R[x_s])
\end{align*}
and the quadratic scoring rule (also called Brier score \cite{ScoringRuleWeather}):
\begin{align*}
S(R, x_s) = 2 \cdot R[x_s] - \sum_{x}R[x]^2
\end{align*}

To characterize the influence of a single report on $R$, we first have to 
describe how the center aggregates the reports it receives. Since $R$ is a normalized histogram of reports,
on receiving the $t$-th report $x$, the distribution $R$ is
updated as\footnote{For simplicity we ignored the influence of additive (Laplace) smoothing on $R$.}:
\begin{align*}
&R^{t+1}[x] = \frac{1}{t+1}(tR^{t}[x] + 1) = R[x] + \frac{1}{t+1}(1 - R[x]) \\
&R^{t+1}[y \ne x] = \frac{t}{t+1}R[y] = R[y] - \frac{1}{t+1}R[y]
\end{align*} 
In other words, $R$ is shifted in the direction of the reported value by $\epsilon$, where $\epsilon = \frac{1}{t+1}$.
The utility of the center after incorporating agent $i$'s report $x_i$ at time $t >> 1$ is approximately equal to:
\begin{align}\label{diff_shadowing}
S(R^{t+1}, x_s) \approx S(R^t, x_s) - \sum_{y \in X} \frac{\partial S(R^t, y)}{\partial y} \epsilon R^t[y] + \frac{\partial S(R^t, x_s)}{\partial x_i} \epsilon
\end{align} 
where we used the first order approximation (Taylor expansion). Indeed, $t >> 1$ implies that $\epsilon = \frac{1}{t+1}$ is small,
so this approximation is accurate. Hence, the gain of 
the center for incorporating the agent's report is approximately:
\begin{align}\label{ds_gain}
S(R^{t+1}, x_s) - S(R^t, x_s) \approx \frac{\partial S(R^t, x_s)}{\partial x_i} \epsilon - \sum_{y \in X} \frac{\partial S(R^t, y)}{\partial y} \epsilon R^t[y]
\end{align} 

We can incentivize agent $i$ to report the value that causes the highest gain by assigning her a score that is proportional to it. 
Notice that only the first part of the left hand side of \eqref{ds_gain} depends on agent $i$'s report $x_i$,
so the score can be defined as:
\begin{align*}%
\tau(x_{i},x_{s}, R^t) = \frac{\partial S(R^t, x_{s})}{\partial x_{i}}
\end{align*} 

Due to the fact that the center cannot directly sample the true distribution $Q$, it has to use the report of agent $i$'s peer instead. Provided that the peers are honest, their reports correspond to the samples of $Q$. Together with the definition of the logarithmic scoring rule, this gives us that the  mechanism which incentivizes agents to maximize the logarithmic gain of the center:

\begin{align}\label{pts}
\tau(r, rr, R) =
\begin{cases}
	\frac{1}{R[r]}  &\mbox{if } r = rr \\
	0 &\mbox{otherwise} 
\end{cases}
\end{align} 

Equation \eqref{pts} represents the peer truth serum from the previous section, 
with $f(rr) = 0$ and $C = 1$. 
A more general form can be derived using the appropriately scaled logarithmic scoring rule: 
\begin{align*}
S(R, r) = C \log(R[r])
\end{align*}  
and shifting the payments by $f(rr)$.
Similar normalization can be applied to the payment function derived
from the quadratic scoring rule.

Hence, we obtain the result:
\begin{theorem}
When the distribution $R$ is updated as the histogram of reports received 
by the center, and the quality of $R$ is evaluated using the logarithmic scoring rule and a peer report, the PTS mechanism is the unique mechanism that maximizes the expected quality of $R$ obtained by the center.
\label{theo:optimal-log-result}
\end{theorem}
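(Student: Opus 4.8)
The plan is to spell out the two halves implicit in the derivation preceding the statement: that the PTS payment attains the largest possible expected quality, and that it is the only payment rule that does (up to the constant $C>0$ and the report-independent shift $f(rr)$ already allowed in the definition). Throughout, peers are honest, so the reference report $rr$ is a bona fide sample of $Q$ used to score $R$.

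\emph{Existence.} First I would make the first-order bookkeeping precise. Incorporating agent $i$'s report $x_i$ turns $R^t$ into $R^{t+1}$, which by the histogram update is $R^t$ shifted by $\epsilon=\tfrac1{t+1}$ toward the $x_i$-vertex; the expansion \eqref{diff_shadowing}--\eqref{ds_gain} shows that the realized change of the center's score, $S(R^{t+1},rr)-S(R^t,rr)$, equals $\epsilon\,\partial S(R^t,rr)/\partial x_i$ plus a term independent of $x_i$ (to leading order in $\epsilon$, i.e.\ for $t\gg1$). For $S=\log$ this report-dependent term is $\epsilon/R^t[x_i]$ when $rr=x_i$ and $0$ otherwise, so the PTS payment with $C=1$, $f\equiv0$ (which equals $1/R^t[x_i]$ when $rr=x_i$ and $0$ otherwise) satisfies, for every $rr$,
\[
\tau(x_i,rr,R^t)\;\approx\;\tfrac{1}{\epsilon}\bigl(S(R^{t+1},rr)-S(R^t,rr)\bigr)\;+\;1 .
\]
Thus, per realization of $rr$, the agent's reward is an increasing affine image -- with one common multiplier $1/\epsilon$ -- of the center's realized quality gain; consequently the report maximizing the agent's expected reward is exactly the one maximizing the center's expected quality gain, so the mechanism steers each report, and hence $R$, toward the largest expected center score. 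Replacing $C=1$ by any $C>0$ (equivalently, scoring with $S(R,r)=C\log R[r]$) and adding any $f(rr)$ only rescales and shifts the agent's objective without moving its maximizer, so the whole PTS family attains the optimum.

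\emph{Uniqueness.} Let $\tau$ be any quality-optimal payment function -- for every public $R$ and every true distribution $Q$, the behavior it induces maximizes the center's expected quality. Fix $R$ and set $v_r=(\tau(r,x_1,R),\dots,\tau(r,x_N,R))$ and $w_r=\tfrac1{R[r]}e_r$, so the agent's expected reward for report $r$ is $\langle Q,v_r\rangle$ and the center's expected quality gain is $\epsilon\langle Q,w_r\rangle$ plus an $r$-independent term; optimality forces $\arg\max_r\langle Q,v_r\rangle=\arg\max_r\langle Q,w_r\rangle$ for all $Q\in\Delta(X)$. I would then match upper envelopes on the simplex: for $r\neq j$, the set of $Q$ on which $r$ and $j$ jointly attain the $w$-maximum is a relatively open piece of the affine subspace $W_{rj}=\{Q:\langle Q,w_r-w_j\rangle=0\}\cap\mathrm{aff}(\Delta)$, and on it the affine functional $\langle\cdot,v_r-v_j\rangle$ vanishes; an affine functional vanishing on a relatively open piece of an affine subspace vanishes on it entirely, so $\langle Q,v_r-v_j\rangle=\lambda_{rj}\langle Q,w_r-w_j\rangle$ on $\mathrm{aff}(\Delta)$, with $\lambda_{rj}>0$ forced by comparing signs for $Q$ near the vertices $e_r$ and $e_j$ (where $r$, resp.\ $j$, is the unique maximizer). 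A short consistency check over triples $r,j,k$ -- using that $\langle\cdot,w_r-w_k\rangle$ and $\langle\cdot,w_k-w_j\rangle$ are linearly independent functionals on $\mathrm{aff}(\Delta)$ for $N\ge3$, with nothing to check when $N=2$ -- yields $\lambda_{rj}\equiv\lambda$ for one constant $\lambda>0$; then a short linear-algebra step (any vector $d$ with $\langle Q,d\rangle=0$ for all $Q\in\mathrm{aff}(\Delta)$ is $0$, since $\mathrm{aff}(\Delta)\ni e_1,\dots,e_N$) gives $v_r-\lambda w_r=u$ for one common $u$. So $\tau(r,rr,R)$ equals $\tfrac{\lambda}{R[r]}+u[rr]$ when $r=rr$ and $u[rr]$ otherwise -- a PTS payment with $C=\lambda$ and $f=u$.

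The existence half is essentially the Taylor bookkeeping already in the text together with the pointwise-affine identity above. The main obstacle is the uniqueness half, specifically the envelope-matching step: one must argue that equality of the two $\arg\max$ partitions across the whole simplex -- including its lower-dimensional faces and the loci where three or more values tie -- forces $\{v_r\}$ and $\{w_r\}$ to differ by one positive affine transformation rather than merely to share a pointwise maximizer, and keeping the tie loci and the $N=2$ versus $N\ge3$ cases free of gaps is the delicate part.
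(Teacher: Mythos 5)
Your proposal is correct, and its two halves relate differently to the paper. The paper gives no separate appendix proof of Theorem~\ref{theo:optimal-log-result}: its proof \emph{is} the derivation in Section~\ref{sec:optimality} --- the first-order expansion \eqref{ds_gain} showing that only the term $\partial S(R^t,x_s)/\partial x_i$ depends on the agent's report, the substitution of an honest peer report for a sample of $Q$, and the observation that for the logarithmic rule this report-dependent term is exactly the payment \eqref{pts}; uniqueness is then asserted as a by-product of the construction (up to the scaling $C$ and shift $f$). Your existence half is essentially that same Taylor bookkeeping, made pointwise precise: you note that per realization of $rr$ the PTS reward is an increasing affine image of the realized quality gain with a single multiplier $1/\epsilon$ (your constant $+1$ is indeed what a direct expansion of $\log R^{t+1}[rr]$ gives), so expected-reward and expected-gain maximizers coincide. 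Your uniqueness half, however, takes a genuinely different route: instead of concluding uniqueness from ``the construction outputs PTS,'' you argue that \emph{any} quality-optimal payment must induce, for every belief $Q$, the same argmax cells as the PTS functionals $w_r=e_r/R[r]$, and you recover $\tau$ from this by matching upper envelopes on the simplex --- vanishing of $\langle\cdot,v_r-v_j\rangle$ on a relatively open piece of each tie hyperplane, hence proportionality $\lambda_{rj}(w_r-w_j)$, the triple-consistency step forcing a single $\lambda>0$, and the common shift $u$. This buys an actual ``only if'' statement (no payment outside the PTS family can be optimal), which the paper never establishes, at the cost of the tie-locus, sign, and $N=2$ bookkeeping you correctly flag as the delicate part. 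The remaining caveats in your sketch --- identifying the agent's predictive belief with $Q$ via honest peers, working only to first order in $\epsilon$ (i.e.\ $t\gg 1$), and reading ``maximizes'' as strict incentive so that $\lambda_{rj}>0$ rather than merely $\lambda_{rj}\ge 0$ --- are all inherited from the paper's own informal framing and are not gaps relative to it.
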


The derivation above has some similarities with the shadowing method 
proposed by \cite{witkowski-ec12}, where an agent is rewarded using the shifted prior and the quadratic scoring rule.
Expression \eqref{diff_shadowing} can be considered as the
\textit{differential} shadowing over the logarithmic scoring rule.  
By identifying crucial parts of the shadowing score, we were able to 
significantly simplify the scoring function - we obtained the peer truth serum  and, hence, showed how it relates to the logarithmic gain of the center. 

Note that the same reasoning can be applied to the quadratic scoring
rule to obtain:
\begin{align}\label{quadratic-pts}
\tau(r, rr, R) =
\begin{cases}
	2 - 2R[r]  &\mbox{if } r = rr \\
	-2R[r] &\mbox{otherwise} 
\end{cases}
\end{align} 
It turns out that this leads to an alternative version of the Peer
Truth Serum. It would require a slightly different, linear self-predicting condition:
\begin{definition}
An agent's belief update is {\em self-predicting} with respect to the quadratic scoring rule if and only if the observed value has the highest relative increase in probability among all possible values: 
\begin{equation}
Pr[o|o]- Pr[o] > Pr[x|o] - Pr[x]  \;\; \forall x \neq o
\label{eq:linear-self-predicting-condition}
\end{equation}
\label{def:self-predicting-quadratic}
\end{definition} 
However, obviously Proposition~\ref{prop:consensusonly} will not hold
for this scheme, and so the uniqueness and impossibility results would
have to be proven in a different way. We do conjecture, however, that 
they hold, and in fact would hold if other scoring rules are used as a 
basis as well. Recent work~(\cite{FW:16}) has shown
new ways of understanding scoring rules that may form a better basis
for such development.

\section{Application Examples}
\label{sec:applications}

To understand the applicability of the PTS scheme to the practical scenarios it was intended for, we have developed several example applications that allow to evaluate the performance of the scheme.

\subsection{Community Sensing}

Maintaining a map of current levels of an environmental phenomenon such
as air pollution requires a network of sensors spread out in space.
It would make a lot of sense to have these sensors owned and operated
by individuals who get paid for delivering the measurement data, an
idea known as {\em community sensing}. The main difficulty with such
a scheme is that individuals may report inaccurate or even fictitious
data. PTS can solve this problem by rewarding accurate data. In 
simulations on a detailed model of air pollution based on actual
measurements~\cite{PTS-Sensors}, we have shown that using an adaptation
of the PTS scheme:
\begin{itemize}
\item reporting actual measurements achieves significantly higher rewards than reporting fictitious data such as constant values or the current predictions;
\item increasing accuracy results in increased rewards;
\item "gaming" the scheme requires collusion of a majority of agents.
\end{itemize}

\subsection{Human Computation}

It has become popular to use a crowd of relatively unskilled workers to
solve tasks that computers cannot solve, such as interpretation of images
or complex texts. By giving the same task to multiple workers and aggregating their answers, errors will cancel out. However, this 
"wisdom of the crowd" does not hold when answers are subject to systematic bias~\cite{PTS-hcomp}. For example, if 10 instances of a task had an answer
of 100, workers will tend to overestimate the answer if the 11th instance
has an answer of only 50. Here, a bonus scheme can be derived from PTS
that gives additional rewards to uncommon answers. Using experiments on
the Amazon Mechanical Turk platform, we showed that such a bonus scheme 
can indeed eliminate the bias in workers' answers~\cite{PTS-hcomp}.

\subsection{Online Polls}

As the cost of systematic polling increases, conducting polls through
the internet is increasingly attractive. However, as participants 
in these polls select themselves, they attract many participants with
ulterior motives, usually to tip the outcome of the poll in a particular
direction.

Prediction markets~\cite{PredictionMarket,PredictionMarket2,DecisionMarket} have been found to counter this problem by forcing participants to bet their own money on the outcomes they are predicting.
Only participants that actually contribute to the accuracy of the poll outcome can expect to make a profit, thus keeping out those that have nothing to contribute. However, a major problem with prediction markets is that they can only be applied when the true answer will eventually become known. 

The PTS scheme can be applied to this scenario by comparing the predictions against each other. This has the advantage that rewards can be paid before the actual outcome is known; in fact the true outcome never has to be known. We have implemented a 
platform called Swissnoise to test such a scheme. The platform offers
public access to questions such as:
\begin{itemize}
\item vote outcome: Will Scotland become independent?  
\item sports events: Who will win the 2014 soccer world cup?
\item breakthroughs: When will China succeed in landing on the moon?
\end{itemize}
Users can see a current prediction in the form of the probability of
each answer, and add their own prediction. We randomly distributed
users among both a classical prediction market and the peer truth serum,
and showed that both schemes achieve about the same accuracy of prediction~\cite{swissnoise}.

What is interesting, however, is that the PTS scheme can be applied
to questions where we never know the outcome, such as: "What would 
be the outcome of a vote for issue X" when no such vote is planned,
or "What is the quality of product X", and we can again expect the
same high quality of prediction.

\section{Conclusions}

Distilling knowledge out of data is one of the big mantras of our time. When data is collected through agents that may have their own self-interest, it is crucial to ensure that these agents make an effort to provide accurate and meaningful data, instead of random or even malicious information.

Peer consistency is a useful concept as it provides such incentives without the need for independent verification of the ground truth. We have analyzed the feasibility of incentive mechanisms for peer consistency and shown that there is a unique mechanism class (the Peer Truth Serum) which is also optimal. It is the first and only mechanism that satisfies some very intuitive and practical constraints:

\begin{itemize}

\item while it does require agents to have a common prior belief, it does not also require common posterior beliefs, as is the case in earlier peer prediction mechanisms such as~\cite{miller};

\item it does not provide any reward to agents without information, and thus does not require any restriction on participation;

\item when the agents have informed prior beliefs, the mechanism is asymptotically accurate and guaranteed to converge to the true distribution.

\end{itemize}

In experiments, we have shown the applicability of the reward mechanism to community sensing~\cite{PTS-Sensors}, to incentivizing human computation~\cite{PTS-hcomp} and to online opinion polls~\cite{swissnoise}, with very encouraging results. 
Another type of mechanism we have considered is a {\em crowdsourcing} version of PTS where $R^t$ is an average taken over the reports received 
from different but similar phenomena during the same round \cite{tistPaper} - this version is more robust to collusive behaviour.

\begin{acks}
We owe great thanks to the anonymous reviewers and editor whose insightful comments have greatly improved this paper.
\end{acks}

\bibliography{ptsarXiv17}

\begin{thebibliography}{10}

\bibitem{BloombergTweet}
M.~Bloomberg.
\newblock Twitter: https://twitter.com/MikeBloomberg/status/114493100541489152,
  2011.

\bibitem{ScoringRuleWeather}
G.W. Brier.
\newblock Verification of forecasts expressed in terms of probability.
\newblock {\em Monthly weather review}, 78(13), 1950.

\bibitem{Carvalho:2014:OMI:2692375.2692384}
Arthur Carvalho, Stanko Dimitrov, and Kate Larson.
\newblock The output-agreement method induces honest behavior in the presence
  of social projection.
\newblock {\em SIGecom Exch.}, 13(1):77--81, 2014.

\bibitem{PredictionMarketSurvey}
Y.~Chen and D.M. Pennock.
\newblock Designing markets for prediction.
\newblock {\em AI Magazine}, 31(4):42--52, 2010.

\bibitem{DasguptaGoshwww13}
Anirban Dasgupta and Arpita Ghosh.
\newblock Crowdsourced judgement elicitation with endogenous proficiency.
\newblock In {\em Proceedings of the 22nd ACM International World Wide Web
  Conference (WWWÕ13)}, pages 319--330, 2013.

\bibitem{Dellarocas}
C.~Dellarocas.
\newblock The digitization of word-of-mouth: Promise and challenges of online
  reputation systems.
\newblock {\em Management Science}, 49(10):1407--1424, 2003.

\bibitem{difallah2012mechanical}
Djellel~Eddine Difallah, Gianluca Demartini, and Philippe Cudr{\'e}-Mauroux.
\newblock Mechanical cheat: Spamming schemes and adversarial techniques on
  crowdsourcing platforms.
\newblock In {\em CrowdSearch}, pages 26--30, 2012.

\bibitem{PTS-hcomp}
B.~Faltings, B.-D. Tran, P.~Pu, and R.~Jurca.
\newblock Incentives to counter bias in human computation.
\newblock In {\em Proceedings of the 2nd Confderence on Human Computation},
  2014.

\bibitem{PTS-Sensors}
Boi Faltings, Radu Jurca, and Jason~J. Li.
\newblock Incentive mechanisms for community sensing.
\newblock {\em IEEE Transaction on Computers}, 63(1):115--128, 2014.

\bibitem{PredictionMarket}
R.~Forsythe, F.~Nelson, G.~R. Neumann, and J.~Wright.
\newblock Anatomy of an experimental political stock market.
\newblock {\em American Economic Review}, 82(13):1142--1161, 1992.

\bibitem{FW:16}
Rafael Frongillo and Jens Witkowski.
\newblock A geometric method to construct minimal peer prediction mechanisms.
\newblock In {\em Proceedings of the 30th AAAI Conference on Artificial
  Intelligence (AAAI'16)}, 2016.

\bibitem{Galton}
F.~Galton.
\newblock Vox populi.
\newblock {\em Nature}, 75:450--451, 1907.

\bibitem{Crowdsensing}
R.K. Ganti, F.~Ye, and H.~Lei.
\newblock Mobile crowdsensing: Current state and future challenges.
\newblock {\em IEEE Communications Magazine}, 49(11):32--39, 2011.

\bibitem{swissnoise}
F.~Garcin and B.~Faltings.
\newblock Swissnoise: Online polls with game-theoretic incentives.
\newblock In {\em Proceedings of the 26th Conference on Innovative Applications
  of AI}, 2014.

\bibitem{logscore}
Tilmann Gneiting and Adrian~E. Raftery.
\newblock Strictly proper scoring rules, prediction, and estimation.
\newblock {\em Journal of the American Statistical Association}, 102:359--378,
  2007.

\bibitem{collectiveRevelation}
Sharad Goel, Daniel~M. Reeves, and David~M. Pennock.
\newblock Collective revelation: A mechanism for self-verified, weighted, and
  truthful predictions.
\newblock In {\em Proceedings of the 10th ACM conference on Electronic commerce
  (EC 2009)}, 2009.

\bibitem{DecisionMarket}
R.D. Hanson.
\newblock Decision markets.
\newblock {\em IEEE Intelligent Systems}, 14(3):16--19, 1999.

\bibitem{ElementsStatisticalLearning}
T.~Hastie, R.~Tibshirani, and J.~Friedman.
\newblock {\em The Elements of Statistical Learning}.
\newblock Springer, 2009.

\bibitem{peer-consistency}
Shih-Wen Huang and Wai-Tat Fu.
\newblock Enhancing reliability using peer consistency evaluation in human
  computation.
\newblock In {\em Proceedings of the 2013 conference on Computer supported
  cooperative work}, pages 639--638, 2013.

\bibitem{ipeirotis2010quality}
Panagiotis~G Ipeirotis, Foster Provost, and Jing Wang.
\newblock Quality management on amazon mechanical turk.
\newblock In {\em Proceedings of the ACM SIGKDD workshop on human computation},
  pages 64--67. ACM, 2010.

\bibitem{jf2003}
Radu Jurca and Boi Faltings.
\newblock An incentive compatible reputation mechanism.
\newblock In {\em IEEE International Conference on E-Commerce}, pages 285--292,
  2003.

\bibitem{JurcaBinderFaltings}
Radu Jurca and Boi Faltings.
\newblock Reliable qos monitoring based on client feedback.
\newblock In {\em Proceedings of the 16th International World Wide Web
  Conference}, pages 1003--1011, 2007.

\bibitem{jurca-faltings-ec08}
Radu Jurca and Boi Faltings.
\newblock Incentives for expressing opinions in online polls.
\newblock In {\em Proceedings of the 9th ACM Conference on Electronic
  Commerce}, pages 119--128, 2008.

\bibitem{jf2009}
Radu Jurca and Boi Faltings.
\newblock Mechanisms for making crowds truthful.
\newblock {\em Journal of Artificial Intelligence Research}, 34:209--253, 2009.

\bibitem{acmec-workshop-2011}
Radu Jurca and Boi Faltings.
\newblock Incentives for answering hypothetical questions.
\newblock In {\em Workshop on Social Computing and User Generated Content
  (EC-11)}, 2011.

\bibitem{CommunitySensing}
A.~Krause, E.~Horvitz, A.~Kansal, and F.~Zhao.
\newblock Toward community sensing, 2008.

\bibitem{truthfulSurveys}
Nicolas Lambert and Yoav Shoham.
\newblock Truthful surveys.
\newblock In {\em Proceedings of the 3rd International Workshop on Internet and
  Network Economics (WINE 2008)}, 2008.

\bibitem{lambert}
Nicolas Lambert and Yoav Shoham.
\newblock Eliciting truthful answers to multiple-choice questions.
\newblock In {\em Proceedings of the tenth ACM conference on Electronic
  Commerce}, pages 109--118, 2009.

\bibitem{miller}
Nolan Miller, Paul Resnick, and Richard Zeckhauser.
\newblock Eliciting informative feedback: The peer-prediction method.
\newblock {\em Management Science}, 51:1359--1373, 2005.

\bibitem{Papakonstantinou-et-al}
A.~Papakonstantinou, A.~Rogers, and E.~H.~Gerding andN. R.~Jennings.
\newblock Mechanism design for the truthful elicitation of costly probabilistic
  estimates in distributed information systems.
\newblock {\em Artificial Intelligence}, 175(2):648--672, 2011.

\bibitem{PredictionMarket2}
D.M. Pennock, S.~Lawrence, C.L. Giles, and F.A. Nielsen.
\newblock The real power of artificial markets.
\newblock {\em Science}, 291(5506):987--988, 2001.

\bibitem{BTS}
Drazen Prelec.
\newblock A bayesian truth serum for subjective data.
\newblock {\em Science}, 34(5695):462--466, 2004.

\bibitem{rf2013}
Goran Radanovic and Boi Faltings.
\newblock A robust bayesian truth serum for non-binary signals.
\newblock In {\em Proceedings of the 27th AAAI Conference on Artificial
  Intelligence (AAAI'13)}, 2013.

\bibitem{rf2014}
Goran Radanovic and Boi Faltings.
\newblock Incentives for truthful information elicitation of continuous
  signals.
\newblock In {\em Proceedings of the 28th AAAI Conference on Artificial
  Intelligence (AAAI'14)}, 2014.

\bibitem{rf2015}
Goran Radanovic and Boi Faltings.
\newblock Incentives for subjective evaluations with private beliefs.
\newblock In {\em Proceedings of the 29th AAAI Conference on Artificial
  Intelligence (AAAI'15)}, 2015.

\bibitem{tistPaper}
Goran Radanovic, Boi Faltings, and Radu Jurca.
\newblock Designing games with a purpose.
\newblock {\em ACM Transactions on Intelligent Systems and Technology: Crowd in
  Intelligent Systems}, page To appear, 2016.

\bibitem{MinimalBTS}
Blake Riley.
\newblock Minimum truth serums with optional predictions.
\newblock In {\em Proceedings of the 4th Workshop on Social Computing and User
  Generated Content (SC14)}, 2014.

\bibitem{savage}
Leonard~J. Savage.
\newblock Elicitation of personal probabilities and expectations.
\newblock {\em Journal of the American Statistical Association},
  66(336):783--801, 1971.

\bibitem{vonAhn:2008:DGP:1378704.1378719}
Luis von Ahn and Laura Dabbish.
\newblock Designing games with a purpose.
\newblock {\em Commun. ACM}, 51(8):58--67, August 2008.

\bibitem{wcOutputAgreement}
Bo~Waggoner and Yiling Chen.
\newblock Output agreement mechanisms and common knowledge.
\newblock In {\em Proceedings of the 2nd AAAI Conference on Human Computation
  and Crowdsourcing (HCOMP)}, 2014.

\bibitem{W:14}
Jens Witkowski.
\newblock {\em Robust Peer Prediction Mechanisms}.
\newblock PhD thesis, Albert-Ludwigs-Universitat Freiburg: Institut fur
  Informatik, 2014.

\bibitem{WB:13}
Jens Witkowski, Yoram Bachrach, Peter Key, and David~C. Parkes.
\newblock Dwelling on the negative: Incentivizing effort in peer prediction.
\newblock In {\em Proceedings of the 1st AAAI Conference on Human Computation
  and Crowdsourcing}, 2013.

\bibitem{witkowski-ec12}
Jens Witkowski and David~C. Parkes.
\newblock Peer prediction without a common prior.
\newblock In {\em Proceedings of the 13th ACM Conference on Electronic Commerce
  (EC' 12)}, pages 964--981, 2012.

\bibitem{Witkowski-aaai12}
Jens Witkowski and David~C. Parkes.
\newblock A robust bayesian truth serum for small populations.
\newblock In {\em Proceedings of the 26th AAAI Conference on Artificial
  Intelligence (AAAI'12)}, 2012.

\bibitem{learningPrior}
Jens Witkowski and David~C. Parkes.
\newblock Learning the prior in minimal peer prediction.
\newblock In {\em Proceedings of the 3rd Workshop on Social Computing and User
  Generated Content (SC 2013)}, 2013.

\bibitem{knowledgeFreePP}
Peter Zhang and Yiling Chen.
\newblock Elicitability and knowledge-free elicitation with peer prediction.
\newblock In {\em Proceedings of the 2014 international conference on
  Autonomous agents and multi-agent systems (AAMAS '14)}, 2014.

\end{thebibliography}
\bibliographystyle{plain}

\section{Proofs}

\subsection{Proof of Proposition \ref{prop:truthrequired}}

\begin{proof}
As we are proving a necessary condition,
without loss of generality we can assume that the population of agents is homogeneous, 
meaning that agents have a common prior belief $Pr$, a common belief updating 
function and they use symmetric strategies. 
Assume, by contradiction, 
that an asymptotically accurate mechanism does not have an equilibrium strategy 
that is truthful when the common private belief $Pr$ is sufficiently close to a certain 
value $\bar R$ of the public histogram, in particular, when $\bar R = Pr$. 
Due to the consistency of agents' strategies, this means that there exists a set
$\Delta_{\epsilon} = \{ F | F \in \Delta(X) \land  \lVert F, Pr \lVert < \epsilon \}$ 
such that for all $R^t \in \Delta_{\epsilon}$ 
agents are not incentivized to report truthfully, i.e. by a strict equilibrium, agents misreport.

Now, suppose that the true distribution $Q$ is equal to the common prior $Pr$, and the prior is fixed. 
By the law of large numbers, $R^t$ converges to $Pr$ 
only if the histogram of reports $R^t \in \Delta_{\epsilon}$ 
converges to $Pr$. If agents adopt symmetric and consistent pure strategies, this can happen in only two cases:
\begin{itemize}
\item The agents report honestly, however this contradicts the assumption that the agents misreport; or,
\item the agents misreport in a specific way to obtain the convergence, 
however this means that $R^t$ does not converge when we consider a different common prior 
$Pr' \in  \Delta_{\epsilon} \backslash \{Pr\}$ 
that is asymmetric ($\forall x \ne y: Pr'[x] \ne Pr'[y]$), and the true distribution
$Q = Pr'$. Namely, agents' strategies for $Pr'$ and $R^t$ are equal to those for 
$Pr$ and $R^t$, while
the law of large numbers tells us that
the frequency of reports $x$ converges towards the frequency of
observations $o$ for pure strategies $s$ 
only if $x = s(o, Pr', R^t \in  \Delta_{\epsilon})$ and $x \ne s(o', Pr', R^t \in  \Delta_{\epsilon})$, where $o' \ne o$. 
\end{itemize}
The second claim follows directly from the proof, since the proof meets the conditions of the claim. 
\end{proof}

\subsection{Proof of Proposition \ref{prop:nopriorfree}}

\begin{proof}
We construct two probability distributions $Pr_{1}[\cdot]$ and $Pr_{2}[\cdot]$ in $\Delta(X)$, the space of possible distributions over $X$, such that any payment that encourages truthful reporting under $Pr_{1}$ encourages lying under $Pr_{2}$.

We build the priors $Pr_1$ and $Pr_2$ as expectations derived from a belief model where all observation distributions are possible with a probability characterized by a Dirichlet distribution. For example, if $\omega \in \Delta(X)$ is a distribution over the set $X$, the probability of $\omega$ is characterized by the Dirichlet prior:
\begin{align*}
p(\omega) = \frac{1}{B(\alpha)} \prod_{i = 1}^{N} \omega_{i}^{\alpha_{i} - 1}
\end{align*}
where $\omega = (\omega_{1}, \omega_{2}, \ldots \omega_{N}) \in \Delta(X)$ is the vector of $\omega_i$ = probability of observation $x_i$, and $\alpha = (\alpha_{1}, \alpha_{2}, \ldots, \alpha_{N}) \in \N^{N}$, $\alpha_{i} > 1$ are the Dirichlet coefficients. $B(\alpha)$ is the multinomial beta function.

Given the model above, we can construct the prior $Pr$ as an expectation over all possible values $\omega$:
\[
Pr[x_i] = \int_{\omega \in \Delta(X)} p(\omega) \cdot \omega_i \cdot d\omega
\] 

Let us also assume that agents use pure Bayesian updating. Since the Dirichlet distribution is a conjugate prior for the multinomial process, the posterior is also a Dirichlet distribution with the following form:
\[
p(\omega | x_{k}) = \frac{1}{B(\alpha')} \prod_{i = 1}^{N} \omega_{i}^{\alpha'_{i} - 1}
\]
where the coefficients $\alpha'$ after the observation $x_k$ are related to the coefficients $\alpha$ through the following equation:
\[
\alpha'_{i} =
\left\{ 
	\begin{array}{ll}
		\alpha_{i}                                & \mbox{if} \; i \neq k \\
		\alpha_{i} + 1                         & \mbox{if} \; i = k
	\end{array}
\right.
\]

Let us choose two different Dirichlet distributions $p_{1}$ and $p_{2}$ (describing two different belief models) such that the coefficients $\alpha 1$ and $\alpha 2$ satisfy:
\[
\begin{array}{ll}
\alpha 2_{x} &= \alpha 1_{x} + 1 \\
\alpha 2_{y} &= \alpha 1_{y} - 1 \\
\alpha 2_{z} &= \alpha 1_{z} \; \forall z \in X \setminus \{x, y\}
\end{array}
\]
The prior $Pr_1$ generated by the model $p_1$ is clearly different from the prior $Pr_2$ generated by $p_2$. Nevertheless, $p_{1}(\omega | x) = p_{2}(\omega | y)$, and therefore the posterior beliefs will also be the same: $Pr_1[z | x] = Pr_2[z | y]$ for all $z \in X$.

Since the payment of an agent only depends on the report, the public histogram $R^t$, and the expected reference report, an agent observing $x$ with the prior belief $Pr_1$ and an agent observing $y$ with the prior $Pr_2$ will expect exactly the same reward. Consequently, honesty can only be incentivized in one of the two scenarios. 

Notice, even if payment $\tau$ was a function of agents' identities, the proof would still hold as long as the center does not know which agents adopt which priors (as assumed by our setting). Similarly, the second claim of the proposition is satisfied since the center does not know whether agents common prior is equal to $Pr_1$ or $Pr_2$, while both priors and the corresponding posteriors meet the conditions of the claim. 
\end{proof}
 
\subsection{Proof of Proposition \ref{prop:convergence}}

We first show that in a helpful reporting strategy, an agent with informed prior
will never non-truthfully report a value that will push the public prior $R$ further away from the true distribution
$P$, with distance being measured in the $L_1$ norm.

\begin{lemma}\label{lm_informed_priors}
Suppose agents have $\rho$-informed priors. In a $\rho$-helpful reporting strategy, there will never be a non-truthful report for an observation $x$ instead of another answer $y$ when $R[x] \ge Q[x]$.
\end{lemma}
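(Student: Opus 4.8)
The plan is to argue by contradiction for a single agent, since the stated property is a per-agent property of the reporting strategy at a fixed round. Suppose that at some round the public distribution $R$ satisfies $R[x] \ge Q[x]$, yet some agent who observed $y \neq x$ reports $x$ non-truthfully, i.e. $s(y, Pr, R) = x$. I would derive a contradiction by playing the two defining clauses of a $\rho$-helpful strategy (Definition~\ref{def:helpful_strategy}) against the definition of a $\rho$-informed prior (Definition~\ref{def:informed-prior}).

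First I would split on the definition of $\rho$-informedness. If $R$ is $\rho$-close to $Pr$, then the first clause of a $\rho$-helpful strategy forces the agent to report truthfully, so $s(y,Pr,R)=y\neq x$, contradicting the assumption. Hence the prior must be \emph{informed} with respect to $R$ and $Q$, which gives in particular $(R[x]-Q[x])(R[x]-Pr[x]) \ge 0$. Next I would use the contrapositive of the second clause of a $\rho$-helpful strategy: since the agent reports $x$ although it observed $y \neq x$, we cannot have $R[x] \ge Pr[x]$, so $R[x]-Pr[x] < 0$.

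Combining, the hypothesis gives $R[x]-Q[x] \ge 0$ and the helpful clause gives $R[x]-Pr[x] < 0$, hence the product $(R[x]-Q[x])(R[x]-Pr[x]) \le 0$, while informedness forces it to be $\ge 0$; so the product equals $0$, and since its second factor is strictly negative this forces $R[x]=Q[x]$. Thus no non-truthful report of $x$ is possible when $R[x] > Q[x]$; the only residual case is the exact equality $R[x]=Q[x]$, which occurs on a negligible set of histograms and does not affect the $L_1$-convergence argument of Proposition~\ref{prop:convergence}.

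The delicate point is making the two disjunctions line up: one must check that the parameter $\rho$ appearing in the informedness hypothesis is the same one used in the helpful-strategy clauses (it is, by our convention), and that the $\rho$-close branch of informedness is genuinely closed off by clause~1 of helpfulness rather than leaving room for a non-truthful report. Everything else is the short sign computation above, and notably the argument needs neither self-prediction nor the posterior belief $Pr[\cdot\,|\,o]$ — only the common prior, the informedness condition, and the two clauses defining a helpful strategy.
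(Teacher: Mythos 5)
Your argument is essentially the paper's: you split the $\rho$-informed disjunction into its two branches, use clause~1 of Definition~\ref{def:helpful_strategy} to close off the branch where $R$ is $\rho$-close to $Pr$, and on the informed branch combine clause~2 with the sign condition $(R[x]-Q[x])(R[x]-Pr[x])\ge 0$ to rule out a non-truthful report of $x$. The only divergence is presentational: you make explicit the boundary case $R[x]=Q[x]$, which the paper's one-line step ``$R[x]\ge Q[x]$ implies $R[x]\ge Pr[x]$'' silently glosses over, so this is the same proof, not a different route.
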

\begin{proof}
If agents have $\rho$-informed but not informed priors, their reports
are truthful according to the first condition of helpfulness. Otherwise,
agents have informed priors, so that $R[x] \ge Q[x]$ implies $R[x] \ge Pr[x]$, and by the 2nd condition 
of helpfulness $x$ will not be non-truthfully reported. 
\end{proof}

\begin{proof}(Proposition~\ref{prop:convergence})
We need to show that $R^T$ converges (in probability) to $Q$, i.e. \\
$\lim_{T \rightarrow \infty} Pr( \lVert R^T - Q\lVert \ge \epsilon) \rightarrow 0$, for
any $\epsilon > 0$. 
Since $\sum_x R^T[x] = 1$ and $\sum_x Q[x] = 1$, it is sufficient to show that the probability of any particular value 
$x$ being overestimated ($R[x] > Q[x]$) goes to 0 as $t$ goes to infinity.
Due to the informed priors and Lemma \ref{lm_informed_priors}, 
we know that agents will not non-truthfully report an over-reported value. 
Therefore, in the worst case scenario, a certain value $x$ is 
always reported when it is underestimated ($R[x] < Q[x]$), regardless of agents' observations,
while when it is overestimated ($R[x] \ge Q[x]$), agents report it 
truthfully (i.e. $x$ is reported when $o = x$). 
As assumed by the setting, $R^t$ is updated after $M$ reports, where 
$M$ is fixed.  

To analyze this situation we investigate a random variable $Y^t$:

\begin{align*}
Y^t =
\begin{cases}
	R^0[x]  &\mbox{if } t = 0\\
	\frac{num(o = x)}{M} &\mbox{if } t > 0 \textit{ and } R^t[x] \ge Q[x] \\  
	1 &\mbox{if } t > 0 \textit{ and } R^t[x] < Q[x]  
\end{cases}
\end{align*}

where $num(o=x)$ counts the number of observations equal to $x$, among $M$ agents that report at time $t$. 
Let us also define $T'$ as maximal $t \in (0, T)$ such that $R^t[x] < Q[x]$, and 
if such $t$ does not exist we set $T' = 0$. 
For $T'' > T'$, $\sum_{t=0}^{T''} Y^t$ is upper bounded by $\sum_{t=0}^{T''} Z^t$,
where $Z^t$ is a random variable defined by:

\begin{align*}
Z^t =
\begin{cases}
	R^0[x]  &\mbox{if } t = T' = 0 \\
	Q[x] + \frac{1}{T'} &\mbox{if }  t \le T' \textit{ and } T' > 0 \\
	\frac{num(o = x)}{M} &\mbox{if } t > 0 \textit{ and } R^t[x] \ge Q[x] \\
\end{cases}
\end{align*}

The claim $\lim_{T \rightarrow \infty} Pr( \lVert R^T - Q\lVert \ge \epsilon) \rightarrow 0$ 
is now equivalent to $\lim_{T \rightarrow \infty} Pr( |\frac{\sum_{t=0}^{T-1} Z^t}{T} - Q[x]| \ge \epsilon) \rightarrow 0$ (this is simply because $\frac{\sum_{t=0}^{T-1} Z^t}{T}$ approaches $R^T$ as $T \rightarrow \infty$).
We have two cases:

\begin{itemize}
\item $T' = 0$. In this case, 
mean $\mu$ and variance $\sigma^2$ of $\frac{\sum_{t=0}^{T-1} Z^t(x)}{T}$ satisfy:
\begin{align*}
\mu &= \frac{R^0[x]}{T} + \frac{T-1}{T} Q[x]\\
\sigma^2 &= M \cdot (T-1)\frac{Q[x](1-Q[x])}{(M\cdot T )^2} \le \frac{Q[x](1-Q[x])}{T}
\end{align*}
The former is due to the linearity of expectations, while the latter is due to the fact that 
for $t > 0$ variables $Z^t$ are i.i.d., while $\frac{num(o = x)}{M}$ is an average of $M$ i.i.d. Bernoulli variables. 
Using Chebyshev's inequality, we obtain:
\begin{align*} 
Pr(|\frac{\sum_{t=0}^{T-1} Z^t}{T} - \frac{T-1}{T}  Q[x] - \frac{R^0[x]}{T}| \ge \epsilon) \le \frac{Q[x](1-Q[x])}{T\epsilon^2}
\end{align*} 
Which means that $\lim_{T \rightarrow \infty} Pr( |\frac{\sum_{t=0}^{T-1} Z^t}{T} - Q[x]| \ge \epsilon) \rightarrow 0$.
\item $T' > 0$. We distinguish $Z^t$ variables for which $t \le T'$ from those for which $t > T'$.
The average of the former have mean equal to $Q[x] + \frac{1}{T'}$ and variance equal to 0.
The latter have mean equal to $Q[x]$, while each of them have variance equal to $Q[x](1-Q[x])$. 
Hence, mean $\mu$ and variance $\sigma^2$ of $\frac{\sum_{t=0}^{T-1} Z^t}{T}$ satisfy:
\begin{align*}
\mu &= \frac{T'}{T}(Q[x] + \frac{1}{T'}) + \frac{T-T'}{T}Q[x] = Q[x] + \frac{1}{T}\\
\sigma^2 &= M \cdot T'\frac{Q[x](1-Q[x])}{(M \cdot T)^2} \le \frac{Q[x](1-Q[x])}{T}
\end{align*} 
Using Chebyshev's inequality, we obtain:
\begin{align*} 
Pr(|\frac{\sum_{t=0}^{T-1} Z^t}{T} - Q[x] - \frac{1}{T}| \ge \epsilon) \le \frac{Q[x](1-Q[x])}{T\epsilon^2}
\end{align*} 
Which means that $\lim_{T \rightarrow \infty} Pr( |\frac{\sum_{t=0}^{T-1} Z^t[x]}{T} - Q[x]| \ge \epsilon) \rightarrow 0$, and the proposition is proven.
\end{itemize}
\end{proof}

\subsection{Proof of Proposition \ref{prop:no-arbitrage}}

\begin{proof}
We will show that the proposition must hold for a particular class
of prior beliefs; since the payment function must hold for all such
beliefs this is sufficient to show this as a necessary condition.
To shorten the notation, we will drop the third argument $R$ from
the payment function and write $\tau(x,y)$ for $\tau(x,y,R)$.

As in Proposition \ref{prop:nopriorfree}, assume a prior belief derived from a model where every distribution is possible with the following Dirichlet probability:
\[
p(\omega \in \Delta(X)) = \frac{1}{B(\alpha)} \prod_{i = 1}^{N} \omega_{i}^{\alpha_{i} - 1}
\]
where $\alpha = (\alpha_{1}, \alpha_{2}, \ldots \alpha_{N}) \in \N^{N}$, $\alpha_{i} > 1$ are the Dirichlet coefficients. $B(\alpha)$ is the multinomial beta function, and let $\Sigma = \sum_{j = 1}^N \alpha_j$

According to this model, the prior assigns the following probability to each answer $x_i$ is:
\begin{equation}
Pr[x_i] =\frac{\alpha_i}{\sum_{j=1}^N \alpha_j} = \frac{\alpha_i}{\Sigma}
\label{eq:dirichlet}
\end{equation}
If the public distribution $R$ precisely mirrors this prior expectation, the helpful strategy is truthful, so any helpful payment mechanism must satisfy the truthful reporting constraints:

\begin{equation}
\sum_{z \in X} Pr[z|x] \tau(x, z) > \sum_{z \in X} Pr[z|x] \tau(y,z) \;\; \forall x \neq y \in X;
\label{eq:icx}
\end{equation}
\begin{equation}
\sum_{z \in X} Pr[z|y] \tau(y, z) > \sum_{z \in X} Pr[z|y] \tau(x,z) \;\; \forall x \neq y \in X;
\label{eq:icy}
\end{equation}

The posteriors beliefs derived from the Dirichlet prior have a very simple form:
\[
Pr[z|x] =
\left\{ 
	\begin{array}{ll}
		\frac{\alpha_z}{\Sigma + 1}                               & if \; z \neq x \\
		\frac{\alpha_z + 1}{\Sigma + 1}                        & if \; z = x
	\end{array}
\right.
\]

which can be re-written as a function of the prior:
\[
Pr[x|x] = Pr[x] + \frac{1 - Pr[x]}{\Sigma + 1}
\]
\[
Pr[z|x] = Pr[z] - \frac{Pr[z]}{\Sigma + 1}
\]

Making a simplifying notation $\tau_z = \tau(x,z) - \tau(y,z)$ the truthful reporting constraint (\ref{eq:icx}) becomes:
\[
\sum_{z \in X} Pr[z|x] \tau_z > 0
\]
\[
\left(Pr[x] + \frac{1 - Pr[x]}{\Sigma + 1}\right) \tau_x + \sum_{z \in X \backslash \{ x \} } \left(Pr[z] - \frac{Pr[z]}{\Sigma + 1} \right) \tau_z > 0
\]
\begin{equation}
\label{eq:icx2}
\frac{\Sigma}{\Sigma+1} \sum_{z \in X} Pr[z] \tau_z + \frac{1}{\Sigma + 1} \tau_x > 0
\end{equation}

Similarly, the constraint (\ref{eq:icy}) becomes:
\begin{equation}
\label{eq:icy2}
\frac{\Sigma}{\Sigma+1} \sum_{z \in X} Pr[z] \tau_z - \frac{1}{\Sigma + 1} \tau_x < 0
\end{equation}

Both (\ref{eq:icx2}) and (\ref{eq:icy2}) must be satisfied for all priors that create an expectation over the signals that is exactly equal to the public histogram $R$. So consider, for example, a family of Dirichlet priors $p^k(\omega)$ derived from (\ref{eq:dirichlet}) where the Dirichlet parameters are:
\[
\alpha^k_i = k \cdot \alpha_i
\] 
for the integers $k = 1,2,\ldots$. Each prior $p^k(\omega)$ generates the same expectation over the signals:
\[
Pr^k[x] = \frac{\alpha^k_x}{\sum_j \alpha^k_j} = \frac{\alpha_x \cdot k}{k \cdot \Sigma} = \frac{\alpha_x}{\Sigma}
\]
however, the posterior updates are increasingly smaller:
\[
Pr^k[x|x] = Pr[x] + \frac{1 - Pr[x]}{k \cdot \Sigma + 1}
\]
\[
Pr^k[z|x] = Pr[z] - \frac{Pr[z]}{k \cdot \Sigma + 1}
\]

Translating the conditions (\ref{eq:icx2}) and (\ref{eq:icy2}) for each prior in this family, we have that for every $k=1,2,\ldots$ the following inequalities must hold simultaneously:

\begin{equation}
\frac{k \cdot \Sigma}{k \cdot \Sigma+1} \sum_{z \in X} Pr[z] \tau_z + \frac{1}{k \cdot \Sigma + 1} \tau_x > 0
\end{equation}
\begin{equation}
\frac{k \cdot \Sigma}{k \cdot \Sigma+1} \sum_{z \in X} Pr[z] \tau_z - \frac{1}{k \cdot \Sigma + 1} \tau_x < 0
\end{equation}

Which can only happen when:
\[
\sum_{z \in X} Pr[z] \tau_z = 0
\]
and therefore:
\begin{align*}
\sum_{z \in X} Pr[z] \tau_z = \sum_{z \in X} Pr[z] \left( \tau(x,z) - \tau(y,z) \right) =\sum_{z \in X} R[z] \left( \tau(x,z) - \tau(y,z) \right) = 0
\end{align*}
which proves the proposition.
\end{proof}

\subsection{Proof of Proposition \ref{prop:consensusonly}}

Before proving the proposition let us prove the following statement.

\begin{lemma}
Let $\tau$ be a bounded unconstrained-prior payment function accepting a helpful reporting strategy as a strict equilibrium. Then for a setting with at least 3 values with prior probabilities bounded away from 0 and 1, the payment $\tau(r,rr,R)$ is independent of $r$ for all values of $r \neq rr$,
and thus can be written as $\tau(r,rr,R) = f(rr,R)$.
\label{lm:independent}
\end{lemma}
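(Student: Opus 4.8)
The plan is to show that if $\tau(r_1, rr, R) \neq \tau(r_2, rr, R)$ for some pair $r_1 \neq r_2$, both distinct from a reference value $rr$, then we can construct a prior belief for which the helpful (hence, truthful) equilibrium fails. Since we are in a setting with at least three values and $R$ equal to the agent's prior $Pr$ (the regime where helpfulness forces truthfulness), I will use the same Dirichlet-based family of priors as in the proof of Proposition~\ref{prop:no-arbitrage}: take the Dirichlet parameters $\alpha^k_i = k\cdot\alpha_i$ so that the prior expectation stays fixed at $R$ while the posterior update shrinks as $k\to\infty$. The truthfulness constraints then take the limiting form derived there, namely that the difference vector $\tau_z = \tau(x,z,R) - \tau(y,z,R)$ must satisfy $\sum_z R[z]\tau_z = 0$ together with the sign condition on $\tau_x$ (and symmetrically $\tau_y$) coming from the leading-order term.

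First I would fix three distinct values; call them $a$, $b$, and $c$, with $rr = c$ playing the role of the reference report whose $r$-dependence I want to rule out. Consider the two candidate reports $r = a$ and $r = b$. Using the pair of truthfulness inequalities (the analogues of (\ref{eq:icx2}) and (\ref{eq:icy2})) for the agent deciding between reporting $a$ and $b$, and pushing $k\to\infty$ exactly as in Proposition~\ref{prop:no-arbitrage}, I get $\sum_z R[z]\big(\tau(a,z,R) - \tau(b,z,R)\big) = 0$. This is one linear constraint; it does not yet pin down $\tau(a,c,R) = \tau(b,c,R)$. The key extra idea is to vary \emph{which} prior we use while keeping $R$ fixed only on the coordinate we care about — more precisely, to exploit that the constraint $\sum_z R[z]\big(\tau(a,z,R)-\tau(b,z,R)\big)=0$ must hold for \emph{every} admissible prior whose expectation equals $R$, and these priors' posteriors can single out any one coordinate for a small perturbation. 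Concretely, by comparing the constraint obtained from an update toward $z = c$ versus an update toward some other $z = d$, and subtracting, the common $\sum_z R[z]\tau_z$ term cancels and one isolates $\tau(a,c,R) - \tau(b,c,R) = \tau(a,d,R) - \tau(b,d,R)$; combined with $\sum_z R[z]\tau_z = 0$ and the freedom to choose $d$ ranging over all values $\neq a,b$, together with boundedness of $\tau$, this forces $\tau(a,z,R) - \tau(b,z,R)$ to be constant in $z$ over $z \notin\{a,b\}$, and then the sum-zero condition plus a third value forces that constant to be $0$. Hence $\tau(a,c,R) = \tau(b,c,R)$, i.e. $\tau(r,rr,R)$ is independent of $r$ whenever $r\neq rr$, which is exactly $\tau(r,rr,R) = f(rr,R)$.

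The main obstacle I anticipate is making the "vary the prior to isolate one coordinate" step fully rigorous: the Dirichlet family $\alpha^k = k\alpha$ only gives the symmetric shrinking update that collapses everything onto $\sum_z R[z]\tau_z = 0$, so to separate the individual $\tau_z$'s I need a richer two-parameter family of priors — one parameter controlling the overall concentration (to send the update magnitude to zero) and another controlling the \emph{direction} of the posterior shift — all while holding the prior mean equal to $R$. Constructing such a family and checking that the prior mean is genuinely invariant requires a small computation with the Dirichlet mean formula $Pr[x_i] = \alpha_i/\Sigma$; the delicate point is that one cannot move $\alpha_i$ for a single $i$ without changing $\Sigma$ and hence all the means, so the perturbation must be balanced across at least two coordinates (which is why at least three values are needed — one must stay free to absorb the balancing). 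Once that family is in hand, taking the concentration parameter to infinity in the truthfulness inequalities for the $(a,b)$ decision, for each choice of shift direction, yields the system of linear constraints on $\{\tau(a,z,R)-\tau(b,z,R)\}_z$ described above, and boundedness of $\tau$ closes the argument. The case of exactly two values is genuinely excluded, consistent with the lemma's hypothesis, since there is then no third coordinate to absorb the balancing perturbation and the single constraint $\sum_z R[z]\tau_z=0$ is all one gets.
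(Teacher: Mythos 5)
Your overall strategy---pin the prior to $R$, generate many truthfulness constraints from different admissible belief updates, and use boundedness of $\tau$---is the right spirit, but the engine you propose cannot produce the constraints you need. A Dirichlet prior whose mean is fixed at $R$ is a \emph{one}-parameter family: the mean formula $Pr[x_i]=\alpha_i/\Sigma$ forces $\alpha=\Sigma R$, so the only freedom is the concentration $\Sigma$; and, more fundamentally, Dirichlet-categorical updating after observing $a$ multiplies \emph{every} non-observed coordinate by the same factor $\Sigma/(\Sigma+1)$, so no Dirichlet prior with mean $R$ can shift posterior mass between two non-observed values $c$ and $d$. The ``second parameter controlling the direction of the posterior shift'' simply does not exist inside this family, which only ever yields $\sum_z R[z]\bigl(\tau(a,z,R)-\tau(b,z,R)\bigr)=0$ plus sign conditions on the diagonal differences $\tau(a,a)-\tau(b,a)$ and $\tau(a,b)-\tau(b,b)$. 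Note also that a comparison between reports $a$ and $b$ can only come from posteriors after observing $a$ or $b$ (self-prediction forces the observed value to receive the largest relative boost), so ``an update toward $c$'' cannot enter an $(a,b)$ truthfulness constraint; what can vary is how the posterior after observing $a$ spreads mass over the remaining values. The paper bypasses the prior-model issue entirely: since the mechanism must tolerate arbitrary heterogeneous self-predicting updates, it directly posits an admissible posterior with $Pr=R$ in which $x$ and $y$ are boosted by the factor $(1+\gamma)$ and a small mass $\epsilon$ is moved from $y$ to $z$, subtracts $(1+\gamma)$ times the arbitrage-free identity of Proposition~\ref{prop:no-arbitrage}, and lets $\epsilon\to 0$; boundedness of $\tau$ then forces $\tau(x,z)=\tau(y,z)$.

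Even granting directional constraints, your closing step fails. Subtracting two strict truthfulness \emph{inequalities} does not yield the equality $\tau(a,c,R)-\tau(b,c,R)=\tau(a,d,R)-\tau(b,d,R)$; you would need perturbations in both directions and a limiting argument. More importantly, constancy of $\tau(a,z,R)-\tau(b,z,R)$ over $z\notin\{a,b\}$ together with the sum-zero identity does \emph{not} force the constant to be zero: the $z=a$ and $z=b$ terms of the sum are nonzero in general (in PTS they contribute $+C$ and $-C$) and can absorb a small nonzero off-diagonal constant while preserving the required signs, so the conclusion does not follow. Finally, and decisively, when $|X|=3$---a case the lemma explicitly covers---there is no second column $d\notin\{a,b,c\}$ at all, so your comparison step is vacuous and all that remains is the single sum-zero identity, which is insufficient to conclude $\tau(a,c,R)=\tau(b,c,R)$. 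As written, the proposal therefore does not prove the lemma; the missing ingredient is precisely the paper's direct construction of an admissible $\epsilon$-perturbed self-predicting posterior (no conjugate prior family is needed) combined with the arbitrage-free identity and the boundedness argument.
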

\begin{proof} 
We will simplify the notation and drop the argument $R$ from the
payment function, thus we write $\tau(r,rr)$ instead of $\tau(r,rr,R)$.
The payment function $\tau$ is bounded and positive by assumption, so that for all answers $r,rr \in X$, there is some positive number $\Theta$ such that:
\[
0 \leq \tau(r,rr) \leq \Theta
\]
Let $x$, $y$ and $z$ be three possible values and contrary to the claim, assume that for $z$ we have $\tau(x,z) - \tau(y,z) = \delta > 0$ (if $\delta < 0$ then switch $x$ and $y$). Furthermore, let $\gamma$ be the
largest real such that $Pr[\bar x] = R[\bar x] \in [\gamma, 1-\gamma]$, for all $\bar x \in \{x,y,z\}$. By the assumption that the priors are bounded away from 0 and 1,
$\gamma > 0$.

Now consider an agent that adopts the distribution $R$ as its prior ($Pr[\cdot] = R[\cdot]$) and updates its belief in the following way:
\begin{itemize}
\item $Pr[x|x] = Pr[x] (1+\gamma)$
\item $Pr[y|x] = Pr[y](1+\gamma) - \epsilon$
\item $Pr[z|x] = Pr[z] (1-\gamma \frac{Pr[x]+Pr[y]}{Pr[z]}) + \epsilon$
\end{itemize} 
where $\epsilon > 0$ is some positive value.

Since $\tau$ accepts a helpful reporting strategy as an equilibrium and the priors $Pr[\cdot]$ are exactly equal to the public distribution, the agent must find truthful reporting rational, i.e.:
\begin{align*}
&Pr[x|x] \tau(x,x) + Pr[y|x] \tau(x,y) + Pr[z|x] \tau(x,z) \\
&> Pr[x|x] \tau(y,x) + Pr[y|x] \tau(y,y) + Pr[z|x] \tau(y,z)
\end{align*}
while the no-arbitrage condition (Proposition~\ref{prop:no-arbitrage}) states:
\begin{align*}
&Pr[x] \tau(x,x) + Pr[y] \tau(x,y)  + Pr[z] \tau(x,z) \\
&= Pr[x] \tau(y,x) + Pr[y] \tau(y,y) + Pr[z] \tau(y,z)
\end{align*}

Subtracting $(1+\gamma)$ times the no-arbitrage-condition, the truthful reporting condition becomes:
\begin{eqnarray*}
& \epsilon (\tau(x,z) - \tau(x,y)) - \gamma(Pr[x] + Pr[y] + Pr[z]) \tau(x,z) \\
& >  \epsilon (\tau(y,z) - \tau(y,y)) - \gamma(Pr[x] + Pr[y] + Pr[z]) \tau(y,z)
\end{eqnarray*}
or equivalently, where we let $\delta = \tau(x,z) - \tau(y,z)$:
\[
\epsilon (\delta + \tau(y,y) - \tau(x,y)) > \gamma \delta (Pr[x]+Pr[y]+Pr[z])
\]
By choosing a sufficiently small $\epsilon$, since $\tau(y,y) - \tau(x,y)$ is a bounded value, and $(Pr[x]+Pr[y]+Pr[z]), \gamma$ and $\delta$ are all positive, we have a contradiction that can only be resolved when $\delta = 0$, and thus $\tau(x,z) = \tau(y,z)$. By symmetry, $\tau(x,y) = \tau(z,y)$
and $\tau(y,x) = \tau(z,x)$, and similarly when there are more than 3 values.
\end{proof}

\textbf{Proof of the proposition}: Now we are ready to prove the proposition.

\begin{proof}
Consider again a situation with 3 values $x$, $y$ and $z$.
Using Lemma~\ref{lm:independent} and the arbitrage-free condition (Proposition~\ref{prop:no-arbitrage}):
\begin{align*}
&R[x] \tau(x,x) + R[y] \tau(x,y)  + R[z] \tau(x,z) \\
&= R[x] \tau(y,x) + R[y] \tau(y,y) + R[z] \tau(y,z)
\end{align*}
simplifies to:
\begin{align*}
R[x] \tau(x,x) + R[y] \tau(x,y) = R[x] \tau(y,x) + R[y] \tau(y,y)
\end{align*}
so that
\begin{align}
&R[x] (\tau(x,x)-\tau(y,x)) = R[y] (\tau(y,y)-\tau(x,y)) \\
&= R[z](\tau(z,z)-\tau(y,z)) = C
\label{eq:rewritten-non-arbitrage}
\end{align}
where the last equality holds by symmetry. Using Lemma~\ref{lm:independent} and equation \ref{eq:rewritten-non-arbitrage}, we obtain:
\begin{align*}
&R[x](\tau(x,x)-\tau(y,x)) = R[x](g(x,x) + f(x) - f(x)) \\
&= R[x]g(x,x) = C
\end{align*}
and thus $g(x,x) = C/R[x]$. Notice, $C$ does not depend on agents' reports (but can depend on $R$). Furthermore, since $g(r,rr)$ is the only part of mechanism $\tau$ that is dependent on report $r$, it has to be that $C > 0$. Otherwise, for highly confident posteriors $Pr[\bar x|\bar x] = 1 - \epsilon$, where $0 < \epsilon << 1$ and $\bar x \in \{x, y, z\}$, agents would either: 
\begin{itemize}
\item be incentivized to misreport because $Pr[\bar x|\bar x] g(\bar x, \bar x) < Pr[\tilde x| \bar x] g(\tilde x, \tilde x)$ for $\tilde x \ne \bar x$, $C < 0$ and small enough $\epsilon$;
\item or would be indeferent between reporting different values for $C = 0$, so the equilibrium would not be strict.   
\end{itemize}
\end{proof}

\subsection{Proof of Theorem \ref{theo:no-truthful}}

\begin{proof}
It is enough to restrict our attention to mechanisms described by Proposition \ref{prop:consensus-only}, because all asymptotically accurate mechanisms are of such form. Furthermore, knowing that mechanisms have to be at least partially truthful by Proposition \ref{prop:truthrequired}, it is enough to show that for arbitrary heterogeneous updates one cannot construct truthful mechanisms
when prior $Pr$ is equal to $R$.

Assume that for some $x$, $y$, the self-predicting condition holds for
$x$, so that:
\[
Pr[x|x]/Pr[x] > Pr[y|x]/Pr[y]
\]
but it does not hold for $y$, so that:
\[
Pr[y|y]/Pr[y] < Pr[x|y]/Pr[x]
\]
Consider now an agent whose prior beliefs are exactly equal to $R$, and that endorses $x$. To compare the rewards it can expect from reporting $x$ or $y$, it suffices to consider the component $g$, as $f$ is independent of its report (see Proposition \ref{prop:consensusonly}). 
For an agent that endorses $x$ to report $x$, it must carry
a higher expected reward than $y$, i.e.:
\begin{eqnarray*}
Pr[x|x] g(x,x) & > & Pr[y|x] g(y,y) \\
Pr[x|x] \frac{C}{Pr[x]} & > & Pr[y|x] \frac{C}{Pr[y]}
\end{eqnarray*}
so that from the self-predicting assumption, it follows that $C>0$.

Conversely, for an agent that also believes $Pr = R$ and that observed $y$ to truthfully report $y$, the part corresponding to $g$ must be higher for reporting $y$ than for reporting $x$:
\begin{eqnarray*}
Pr[y|y] g(y,y) & > & Pr[x|y] g(x,x) \\
Pr[y|y] \frac{C}{Pr[y]} & > & Pr[x|y] \frac{C}{Pr[x]}
\end{eqnarray*}
so that with the self-predicting assumption violated, it follows that
 $C < 0$.

Thus, we have a contradiction.

\end{proof}

\subsection{Proof of Theorem \ref{theo:no-general-prior}}

\begin{proof}
Since all asymptotically accurate mechanism have a form of the peer truth serum,
it is enough to show that the peer truth serum does not converge if priors are uninformed. 

Consider answer space $X = \{ x, y, z\}$ and suppose at time $t$ a public prior $R^t$ is asymmetric ($R^t[x] \ne R^t[y] \ne R^t[z]$). 
Moreover, suppose agents have fixed prior $Pr$ equal to $Pr[x] = R[x]$, $Pr[y] = R[y] - \epsilon$ and $Pr[z] = R[z] + \epsilon$, where $\epsilon > 0$ is small (smaller than $R[y]$ and $1 - R[z]$),
and posteriors\footnote{If one does not permit beliefs $Pr[\bar x | \bar x] = 1$ and $Pr[\tilde x | \bar x] = 0$, it is enough to redistribute a small value from $Pr[\bar x | \bar x]$ to $Pr[\tilde x | \bar x]$; the proof remains valid. One can apply a similar approach in the proof of Proposition \ref{prop:common-prior}.}
\begin{itemize}
\item $Pr[x|x] = 1$, $Pr[y|x] = 0$, $Pr[z|x] = 0$
\item $Pr[x|y] = 0$, $Pr[y|y] = (Pr[y] + \delta) \cdot k $, 
$Pr[z|y] = (Pr[z] - \delta)\cdot k$, where $k = \frac{1}{Pr[y]+Pr[z]}$ and $0 < \delta << \epsilon$
\item $Pr[x|z] = 0$, $Pr[y|z] = 0$, $Pr[z|z] = 1$
\end{itemize}
In this case, agents can be honest for their observations $x$ and $z$, while
for observation $y$ their reward is:
\begin{itemize}
\item for reporting $y$: $\frac{Pr[y|y]}{R[y]} = \frac{Pr[y] + \delta}{Pr[y] + \epsilon}k < k $, where $\delta$ is much smaller than $\epsilon$.
\item for reporting $z$: $\frac{Pr[z|y]}{R[z]} = \frac{Pr[z] - \delta}{Pr[z] - \epsilon}k  > k$, where $\delta$ much smaller than $\epsilon$.
\end{itemize}
This means that an agent's best response to truthfulness of the other agents is not to be honest when she observes $y$.

Clearly, frequency of $y$ reports is not equal to $R^t[y]$ when agents use symmetric and consistent pure strategies, 
so by the law of large numbers histogram $R$ diverges from its current value $R^t$. On the other hand, if prior $Pr$ is allowed to be uninformed, the true distribution $Q$ can be equal to $Q = R^t$, which means that the payment scheme might not be asymptotically accurate. 
\end{proof}

\subsection{Proof of Proposition \ref{prop:common-prior}}

\begin{proof}
Since all asymptotically accurate mechanisms have a form of the peer truth serum,
it is enough to show that the peer truth serum does not converge if priors can be different. 

Consider a scenario defined by the answer space $X = \{ x, y, z\}$ and the true 
distribution $Q[x] = 0.5$, $Q[y] = 0.2$, $Q[z] = 0.3$. 
Suppose each agent has prior different than $R$, but 
believes that the other agents have priors equal to $R$, and, hence, are truthful.

Let agents have prior and posterior beliefs:
\begin{itemize}
\item When $R[y] < Q[y]$:
\begin{itemize}
\item $Pr[x] = R[x] - \epsilon $, $ Pr[y] = R[y] + \epsilon$, $Pr[z] = R[z]$, where $\epsilon > 0$ is small (smaller than $R[x]$ and $1-R[y]$)
\item $Pr[x|x] = 1$, $Pr[y|x] = 0$, $Pr[z|x] = 0$
\item $Pr[x|y] = 0$, $Pr[y|y] = 1$, $Pr[z|y] = 0$
\item $Pr[x|z] = 0$, $Pr[y|z] = Pr[y] \cdot k  - \delta \cdot Pr[z]$, 
$Pr[z|z] = Pr[z] \cdot k + \delta \cdot Pr[z])$, where 
$k = \frac{1}{Pr[y]+Pr[z]}$ and $0 < \delta << \epsilon$
\end{itemize}
In this case agents are honest for their observations $x$ and $y$, while
for observation $z$ their reward is:
\begin{itemize}
\item for reporting $y$: $\frac{Pr[y|z]}{R[y]} = \frac{Pr[y] \cdot k - \delta \cdot Pr[z]}{Pr[y] - \epsilon} > k + \delta$, where $\delta$ is much smaller than $\epsilon$ ($\delta > 0$ for which the inequality holds always exists, because the strict inequality holds for $\delta = 0$).
\item for reporting $z$: $\frac{Pr[z|z]}{R[z]} = \frac{Pr[z] \cdot k + \delta \cdot Pr[z]}{Pr[z]} = k + \delta$, where $\delta$ is much smaller than $\epsilon$.
\end{itemize}
This means that agents are incentivized to report $y$ instead of $z$.
\item When $R[y] > Q[y]$:
\begin{itemize}
\item $Pr[x] = R[x]$, $Pr[y] = R[y] - \epsilon$, $Pr[z] = R[z] + \epsilon$, where $\epsilon > 0$ is small (smaller than $R[y]$ and $1-R[z]$)
\item $Pr[x|x] = 1$, $Pr[y|x] = 0$, $Pr[z|x] = 0$
\item $Pr[x|y] = Pr[x] \cdot k - \delta \cdot Pr[x]$, $Pr[y|y] = Pr[y] \cdot k + \delta \cdot Pr[x]$, $ Pr[z|y] = 0$, where $k = \frac{1}{Pr[x]+Pr[y]}$ and $0 < \delta << \epsilon$
\item $Pr[x|z] = 0$, $Pr[y|z] = 0$, $Pr[z|z] = 1$
\end{itemize}
In this case agents are honest for their observations $x$ and $z$, while
for observation $y$ their reward is:
\begin{itemize}
\item for reporting $x$: $\frac{Pr[x|y]}{R[x]} = \frac{Pr[x] \cdot k - \delta \cdot Pr[x]}{Pr[x]} = k - \delta$, where $\delta$ is much smaller than $\epsilon$.
\item for reporting $y$: $\frac{Pr[y|y]}{R[y]} = \frac{Pr[y] \cdot k + \delta \cdot Pr[x]}{Pr[y] + \epsilon} < k - \delta$, where $\delta$ is much smaller than $\epsilon$ ($\delta > 0$ for which the inequality holds always exists, because the strict inequality holds for $\delta = 0$).
\end{itemize}
This means that agents are incentivized to report $x$ instead of $y$.
\end{itemize}
  
Under the peer truth serum rewarding policy, whenever $R[x]$ is significantly larger than $Q[x]$ and $R[y] \approx Pr[y]$, the agents act in the following way:
\begin{itemize}
\item report $x$ when they observe $x$ or when they observe $y$ and $R[y] > Q[y]$;
\item report $y$ when they observe $y$ and $R[y] \le Q[y]$ or when they observe $z$ and $R[y] < Q[y]$;
\item report $z$ when they observe $z$ and $R[y] > Q[y]$.
\end{itemize}

Now, if, for example, $R^t[x] = 0.7$, $R^t[y] = 0.2$, and $R^t[z] = 0.1$, $R$ will not converge 
(in probability) towards true distribution $Q$ as the frequency of the reports equal to $z$ is strictly less than $0.3$, while frequency of $x$ is strictly grater than $0.5$.    

\end{proof}

\end{document}